\documentclass[journal,onecolumn]{IEEEtran}
\usepackage{amsmath,amssymb,amsthm,mathtools}
\usepackage{enumitem}
\usepackage{cite}
\usepackage{booktabs}
\usepackage{array}
\usepackage{longtable}
\usepackage{xcolor}
\usepackage{caption}
\usepackage[hypertexnames=false,hidelinks]{hyperref}
\newtheorem{theorem}{Theorem}

\newtheorem{lemma}{Lemma}
\newtheorem{corollary}{Corollary}
\newtheorem{definition}{Definition}

\newcommand{\F}{\mathbb F}

\newcommand{\Tr}{\operatorname{Tr}}
\newcommand{\Span}{\operatorname{Span}}

\newcommand{\wt}{\operatorname{wt}}

\title{Minimum distances of primitive narrow-sense BCH codes via good zero-sets}
\author{Run Zheng\thanks{Run Zheng is with the Department of Mathematics, The Hong Kong
University of Science and Technology, Hong Kong, China (e-mail: zhengrun@ust.hk).}}
\begin{document}
\maketitle

\begin{abstract}
Determining the exact minimum distances of BCH codes remains a
open problem. We establish the minimum distances of several
families of primitive narrow-sense BCH codes, showing that they
attain their designed distances. Our approach centers on
$\mathbb{F}_q$-good zero-sets, which we introduce through a derivative
condition on their vanishing polynomials. We show that a $q$-ary
primitive narrow-sense BCH code of length $q^m-1$ and designed
distance $2\leq\delta\leq q^m-1$ has minimum distance $\delta$ if and
only if there exists an $\mathbb{F}_q$-good zero-set of cardinality
$\delta+1$ in the finite field
$\mathbb{F}_{q^m}$ with $q^m$ elements. To construct $\mathbb{F}_q$-good zero-sets, we
develop several methods based on polynomial substitutions, power maps,
and shifted inverses, as well as direct constructions using
polynomials of special forms.
Together with suitable initial $\mathbb{F}_q$-good zero-sets, including those
arising from known minimum-distance results, these methods yield
new good zero-sets of various cardinalities and hence families of primitive narrow-sense BCH codes whose
minimum distances equal their designed distances. These families
cover a broad range of designed distances, with several known
minimum-distance results recovered as special cases.
\end{abstract}

\begin{IEEEkeywords}
 minimum distance, BCH codes, linear codes, cyclic codes.
\end{IEEEkeywords}

\section{Introduction}
BCH codes were introduced independently by Hocquenghem
\cite{Hocquenghem1959} and by Bose and Ray-Chaudhuri
\cite{BoseRayChaudhuri1960}. Their rich algebraic structure,
efficient decoding algorithms, and strong error-correcting capability
have led to extensive study and widespread use in communication and
data storage.
Despite  decades of research, our understanding of these codes remains surprisingly limited. Even for \textit{primitive  narrow-sense}   BCH codes, the most extensively studied subclass, fundamental parameters  have been determined in only a few specific cases. In particular, determining the exact minimum distance remains a challenging open problem, as highlighted in the surveys by Charpin \cite{charpin1998open} and Ding and Li \cite{DingLi2024}.

Throughout  this paper, let $q=p^e$ for some prime $p$ and positive integer $e$, let $m$ be a positive integer, and let $n=q^m-1$. Denote by 
$\mathcal{C}_{(q,m,\delta)}$ the $q$-ary primitive  narrow-sense BCH code of length $n$ and designed distance $\delta$, where $2\leq \delta\leq n$. By  the BCH bound \cite{Hocquenghem1959,BoseRayChaudhuri1960}, the minimum distance of $\mathcal{C}_{(q,m,\delta)}$, denoted by $d\left(\mathcal{C}_{(q,m,\delta)}\right)$, satisfies $d\left(\mathcal{C}_{(q,m,\delta)}\right)\geq \delta$. One subtlety is that 
 different designed distances may define the same BCH code;
that is, $\mathcal{C}_{(q,m,\delta)}=\mathcal{C}_{(q,m,\delta')}$ may hold
for distinct $\delta$ and $\delta'$.
The largest $\delta'$ defining $\mathcal{C}_{(q,m,\delta)}$ is called
its \emph{Bose distance}, denoted by
$d_B\bigl(\mathcal{C}_{(q,m,\delta)}\bigr)$.
Consequently,  we have 
\begin{equation*}
d(\mathcal{C}_{(q,m,\delta)})
\geq d_B(\mathcal{C}_{(q,m,\delta)})
\geq\delta.
\end{equation*}
Determining the Bose distance therefore provides a sharper lower bound on the minimum distance and has attracted considerable attention
\cite{CherchemEtAl2020,D2015,DFZ2017,YF2000,YH1996,zheng2025,zheng2026}.
Beyond the determination of the Bose distance, a fundamental problem is to identify when the minimum distance equals the designed distance or the Bose distance. This  was singled out as a research problem in the classical monograph of MacWilliams and Sloane \cite[Ch. 9, Research Problem 9.1]{MacWilliamsSloane}.

Along this line of research, the minimum distance has been shown to
attain the designed distance, and consequently the Bose distance,
for several families of primitive narrow-sense BCH codes.
A classical result states that
$d(\mathcal{C}_{(q,m,\delta)})=\delta$ whenever $\delta$ divides $n=q^m-1$;
see \cite[p.~247]{BBKK2006}. Peterson \cite{Peterson1969} proved that
a primitive narrow-sense BCH code with designed distance $q^t-1$
has minimum distance $q^t-1$ for $1\leq t\leq m-1$. Berlekamp \cite{Berlekamp1970} proved that the designed
distance is attained for certain families of binary primitive BCH
codes by exploiting the weight distributions of subcodes of
second-order Reed--Muller codes, and Kasami and Lin
\cite{KasamiLin1972} extended this approach to further binary families.
Additional families were obtained through connections with punctured
generalized Reed--Muller codes \cite{sun2026,D2015}, trace
representations together with quadratic-form techniques 
\cite{DFZ2017}, and association schemes of quadratic and bilinear
forms \cite{Li2017}. More recently, Tiwari and Kewat
\cite{TiwariKewat2026} used polynomial interpolation to construct
codewords of weight equal to the designed distance. Chen et al.
\cite{ChenEtAlBCH2026,Chen2026} used locator polynomials and connections
with Goppa codes to obtain further families attaining their designed
distances.

In this paper, we provide a criterion for $\mathcal{C}_{(q,m,\delta)}$ to attain its designed distance $\delta$ by introducing the concept of an $\mathbb{F}_q$-good zero-set as follows.   
For a finite field $\mathbb{F}$, we denote by
$\mathbb{F}^*=\mathbb{F}\setminus\{0\}$ its multiplicative group.
\begin{definition}\label{definition1}
Let $q$ be a prime power and $\mathbb{F}_q$ be the finite field of $q$ elements, and 
let $\mathbb{F}$ be a finite extension  of $\mathbb{F}_q$.
A finite subset $\mathcal{A}$ of $\mathbb{F}$ is called an
\emph{$\mathbb{F}_q$-good zero-set} if $0\in\mathcal{A}$ and there exists
$\lambda\in\mathbb{F}^*$ such that
\[
P_{\mathcal{A}}'(a)\in\lambda\mathbb{F}_q^*
\quad\text{for all }a\in\mathcal{A},
\]
where $P_{\mathcal{A}}(x)=\prod_{\alpha\in\mathcal{A}}(x-\alpha)$ is
the vanishing polynomial of $\mathcal{A}$ and $P_{\mathcal{A}}'$ denotes
its formal derivative. In particular, an $\mathbb{F}_q$-good zero-set is called
\emph{strict} if one can take $\lambda=1$.
\end{definition}

Our  criterion shows that
$
d\bigl(\mathcal{C}_{(q,m,\delta)}\bigr)=\delta
$ 
if and only if there exists an $\mathbb{F}_q$-good zero-set
$\mathcal{A}\subseteq\mathbb{F}_{q^m}$ of cardinality  $\delta+1$.
This characterization is closely related to the locator-polynomial criterion of Chen et al. \cite{ChenEtAlBCH2026}.
In terms of
$\mathbb{F}_q$-good zero-sets, their criterion is equivalent to the existence of such a set
$\mathcal{A}\subseteq\mathbb{F}_{q^m}$ of cardinality $\delta+1$
containing $1$.
Since any codeword of weight $\delta$ can be cyclically
shifted so that one of its locators is $1$, the two criteria are
equivalent as existence statements. Working without the requirement
$1\in\mathcal{A}$ nevertheless provides greater flexibility in
constructing good zero-sets of the required cardinality.
Moreover, the derivative formulation is naturally compatible with
polynomial composition via the chain rule. It allows good zero-sets
to be constructed from known ones by taking suitable inverse images
and also leads to direct constructions using suitable polynomials
that split completely over the ambient field.

Based on this observation, we develop several methods for constructing
$\mathbb{F}_q$-good zero-sets. The first two share a common idea.
Given an $\mathbb{F}_q$-good zero-set $\mathcal{A}$, we choose a subset
$\mathcal{S}\subseteq\mathcal{A}$ and a polynomial $L(x)$ so that
\[
\prod_{\alpha\in\mathcal{A}\setminus\mathcal{S}}(x-\alpha)
\prod_{\alpha\in\mathcal{S}}(L(x)-\alpha)
\]
is the vanishing polynomial of a larger $\mathbb{F}_q$-good zero-set.
The $p$-linearized polynomial substitution takes
$\mathcal{S}=\mathcal{A}$ and a $p$-linearized polynomial $L(x)=x^p-u^{p-1}x$; equivalently, it substitutes
$L(x)=x^p-u^{p-1}x$ for $x$ in the vanishing polynomial
$P_{\mathcal{A}}(x)$.
The $v$-th-power construction instead takes
$\mathcal{S}=\mathcal{A}\setminus \{0\}$ and $L(x)=x^v$ for a suitable
integer $v$.
 Under suitable conditions, the shifted inverse
construction applies the map $\alpha\mapsto(\alpha-\beta)^{-1}$
and adjoins $0$, producing an $\mathbb{F}_q$-good zero-set from a strict
$\mathbb{F}_q$-good zero-set. Finally, we give direct constructions
using polynomials of special forms.

Applying these methods, we construct $\mathbb{F}_q$-good zero-sets of
various cardinalities and thereby obtain several families of primitive
narrow-sense BCH codes that attain their designed distances.
We start with a  $t$-element subset of $\mathbb{F}_q$ containing
$0$ and combine iterated $p$-linearized polynomial substitutions with
the $v$-th-power construction. This yields the family with designed
distance $\delta=\bigl(1+v(tp^u-1)\bigr)p^b-1$ under the parameter
conditions in Theorem \ref{theorem6}. Starting instead with an
$\mathbb{F}_q$-good zero-set of cardinality $q^t+2$, obtained from
\cite[Theorem~8]{Chen2026}, we apply the same procedure to obtain
the family in Theorem \ref{theorem7}.
We also apply the shifted inverse construction to suitable strict
$\mathbb{F}_q$-good zero-sets to obtain the family in
Theorem \ref{theorem8}. 
Finally, polynomials of special forms yield codes with designed
distances $q^{2r}+1$ when $5r\mid m$ and $q^{3r}+1$ when $8r\mid m$, as established in
Theorem \ref{theorem9}. These results verify two additional cases of
\cite[Conjecture~2]{DingDuZhou2015}, which predicts equality of the
minimum and Bose distances of $\mathcal{C}_{(q,m,q^t+1)}$.
Taken together, these constructions cover a broad range of designed
distances and recover several known results as special cases.
To complement these results, we tabulate selected examples from these
families that are optimal or attain the best-known minimum distance
recorded in the table of the best-known linear codes maintained by Markus Grassl at http://www.codetables.de, which is called \textit{Database} later in this paper. 

The rest of this paper is organized as follows. Section II presents
the preliminaries on primitive narrow-sense BCH codes. Section~III proves the
good zero-set characterization in Theorem\ref{theorem1}.
Section IV develops the constructions of $\mathbb{F}_q$-good
zero-sets. Section V applies these constructions to determine the
exact minimum distances of several families of primitive narrow-sense
BCH codes and presents selected numerical examples. Section VI
concludes the paper. The appendix contains proofs of auxiliary lemmas.

\section{Preliminaries}
 Denote by $\mathbb{F}_q$ the finite field with $q$ elements, and by $\mathbb{F}_q^n$ the $n$-dimensional vector space over $\mathbb{F}_q$.
A $q$-ary linear code $\mathcal{C}$ of length $n$ is a linear subspace
of the $n$-dimensional vector space $\mathbb{F}_q^n$. The minimum distance of a nonzero linear code
$\mathcal{C}\subseteq\mathbb{F}_q^n$ is
$$d(\mathcal{C})=\min\{\wt(\mathbf{c}): \mathbf{c}\in\mathcal{C}, \mathbf{c}\neq \mathbf{0}\},$$
where $\wt(\mathbf{c})$ denotes the Hamming weight of $\mathbf{c}$,
namely the number of its nonzero coordinates. 
A linear code $\mathcal{C}$ is called \emph{cyclic} if
$(c_0,c_1,\ldots,c_{n-1})\in\mathcal{C}$ implies that
$(c_{n-1},c_0,\ldots,c_{n-2})\in\mathcal{C}$.
Identify each vector $(c_0,c_1,\ldots,c_{n-1})\in\mathbb{F}_q^n$
with its polynomial representation
\[
c_0+c_1x+\cdots+c_{n-1}x^{n-1}\in\mathbb{F}_q[x]/(x^n-1).
\]
Then a linear code is cyclic if and only if it is an ideal of the
quotient ring $\mathbb{F}_q[x]/(x^n-1)$. Since this quotient  is a
principal ideal ring, every cyclic code $\mathcal{C}$ has a unique
monic generator $g(x)$ dividing $x^n-1$, and we write
$\mathcal{C}=\langle g(x)\rangle$. The polynomial $g(x)$ is called
the \emph{generator polynomial} of $\mathcal{C}$, and
$h(x)=(x^n-1)/g(x)$ is called its \emph{parity-check polynomial}.

Let $\alpha\in\mathbb{F}_{q^m}$ be a primitive element. 
For each integer $i$ with $0\leq i\leq n-1$,  we denote by $m_i(x)$ the minimal polynomial of $\alpha^i$ over $\mathbb{F}_q$. 
The $q$-ary primitive narrow-sense
BCH code of length $n$ with designed distance $\delta $  is the cyclic code  with the generator polynomial 
\begin{equation*}
    g_{(q,m,\delta)}(x)=\mathrm{lcm}\left(m_{1}(x), m_{2}(x),\ldots,m_{\delta-1}(x)\right), 
    \end{equation*}
where $2\leq \delta\leq n$ and  $\mathrm{lcm}$ denotes the least common multiple of the polynomials.    

Throughout the remainder of this paper, we index the coordinates of
vectors in $\mathbb{F}_q^n$ by the elements of $\mathbb{F}_{q^m}^*$
in the order $1,\alpha,\ldots,\alpha^{n-1}$.
Accordingly, we write each vector $\mathbf{c}\in\mathbb{F}_q^n$ as
$\mathbf{c}=(c_\beta)_{\beta\in\mathbb{F}_{q^m}^*}
=(c_1,c_\alpha,\ldots,c_{\alpha^{n-1}})$. Its polynomial representation then takes the form
$c(x)=\sum_{j=0}^{n-1}c_{\alpha^j}x^j$.
By the definition of $g_{(q,m,\delta)}(x)$, the vector $\mathbf{c}$
belongs to $\mathcal{C}_{(q,m,\delta)}$ if and only if
$c(\alpha^i)=0$ for all $1\leq i\leq\delta-1$.
It follows that
\begin{equation}\label{verify}
\mathcal{C}_{(q,m,\delta)}
=
\left\{
\mathbf{c}\in\mathbb{F}_q^n:
\sum_{\beta\in\mathbb{F}_{q^m}^*}c_\beta\beta^i=0
\text{ for all }1\leq i\leq\delta-1
\right\}.
\end{equation}

\section{Good zero-sets}
We now establish a necessary and sufficient condition for
$\mathcal{C}_{(q,m,\delta)}$ to attain its designed distance,
using the $\mathbb{F}_q$-good zero-sets introduced in
Definition~\ref{definition1}.
Recall that
$P_{\mathcal{A}}(x)=\prod_{\beta\in\mathcal{A}}(x-\beta)$
denotes the vanishing polynomial of a finite set $\mathcal A$. 
  We write    $|\mathcal A|$ for its cardinality and $\deg f$ for the degree of a nonzero polynomial $f$. 
We first give the  following Lemma, whose proof is  provided
in the appendix.

\begin{lemma}\label{adlemma1}
Let $\mathbb{F}$ be a field, and let
$\mathcal{D}\subseteq \mathbb{F}$ be a finite set 
with $|\mathcal{D}|\geq 2$. 
Then 
\begin{equation}\label{lemma1e1}
        \sum_{\beta\in \mathcal{D}}\frac{\beta^j}{P_{\mathcal{D}}'(\beta)}=0\qquad \hbox{for all }0\le j\le |\mathcal{D}|-2.
        \end{equation}
\end{lemma}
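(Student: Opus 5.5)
Let $k=|\mathcal{D}|\ge 2$. Since $P_{\mathcal{D}}$ is monic of degree $k$ with the $k$ distinct roots $\beta\in\mathcal{D}$, each root is simple, so $P_{\mathcal{D}}'(\beta)=\prod_{\gamma\in\mathcal{D}\setminus\{\beta\}}(\beta-\gamma)\neq 0$. Hence every term in \eqref{lemma1e1} is well defined. The plan is to use Lagrange interpolation on the node set $\mathcal{D}$ and compare leading coefficients.

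Fix $j$ with $0\le j\le k-2$ and put $f(x)=x^j$. The Lagrange interpolation polynomial of $f$ on the $k$ nodes of $\mathcal{D}$ is
\[
L(x)=\sum_{\beta\in\mathcal{D}} \beta^j\,\frac{P_{\mathcal{D}}(x)}{(x-\beta)P_{\mathcal{D}}'(\beta)} .
\]
Each quotient $P_{\mathcal{D}}(x)/(x-\beta)=\prod_{\gamma\neq\beta}(x-\gamma)$ is a polynomial of degree $k-1$. It takes the value $P_{\mathcal{D}}'(\beta)$ at $x=\beta$ and vanishes at the other nodes. So $L$ has degree at most $k-1$ and agrees with $x^j$ on all of $\mathcal{D}$.

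Since $j\le k-2<k$, the difference $L(x)-x^j$ has degree at most $k-1$ and $k$ distinct roots, so it is the zero polynomial; this uses only that $\mathbb{F}$ is a field. Thus $L(x)=x^j$. Each $\prod_{\gamma\neq\beta}(x-\gamma)$ is monic of degree $k-1$, so the coefficient of $x^{k-1}$ in $L$ equals $\sum_{\beta\in\mathcal{D}}\beta^j/P_{\mathcal{D}}'(\beta)$. The coefficient of $x^{k-1}$ in $x^j$ is $0$ because $j\neq k-1$, which proves \eqref{lemma1e1}.

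No step is a real obstacle. The points that need care are that the distinctness of the elements of $\mathcal{D}$ gives $P_{\mathcal{D}}'(\beta)\ne0$ in every characteristic, and that the hypothesis $|\mathcal{D}|\ge 2$ makes the range $0\le j\le k-2$ nonempty. An alternative route is the partial fraction identity $1/P_{\mathcal{D}}(x)=\sum_\beta 1/\bigl(P_{\mathcal{D}}'(\beta)(x-\beta)\bigr)$, expanded in powers of $1/x$ and compared with $x^{-k}(1+O(x^{-1}))$. The interpolation argument is more direct, so I would use it.
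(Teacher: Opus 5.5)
Your proof is correct and follows essentially the same route as the paper: Lagrange interpolation of $x^j$ on the nodes of $\mathcal{D}$, identification of the interpolant with $x^j$ by the degree argument, and comparison of the coefficients of $x^{|\mathcal{D}|-1}$. Your explicit check that $P_{\mathcal{D}}'(\beta)=\prod_{\gamma\neq\beta}(\beta-\gamma)\neq 0$ in every characteristic is a small point the paper leaves implicit, but the argument is the same.
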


\begin{theorem}\label{theorem1}
Let $\delta$ be an integer satisfying $2 \le \delta \le q^m - 1$. Then $d(\mathcal{C}_{(q,m,\delta)}) = \delta$  if and only if there exists an $\mathbb{F}_q$-good zero-set $\mathcal{A} \subseteq \mathbb{F}_{q^m}$ such that $|\mathcal{A}| = \delta + 1$.
\end{theorem}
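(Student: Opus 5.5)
The plan is to use the parity-check description \eqref{verify} of $\mathcal{C}_{(q,m,\delta)}$ together with Lemma~\ref{adlemma1}. By the BCH bound $d(\mathcal{C}_{(q,m,\delta)})\ge\delta$, so $d=\delta$ holds exactly when there is a codeword of weight exactly $\delta$. The theorem therefore reduces to matching weight-$\delta$ codewords with $\mathbb{F}_q$-good zero-sets of size $\delta+1$. The dictionary is: the support $\mathcal{B}\subseteq\mathbb{F}_{q^m}^*$ of the codeword corresponds to $\mathcal{A}=\mathcal{B}\cup\{0\}$, and the nonzero entries $c_\beta$ should be proportional to $\beta/P_{\mathcal{A}}'(\beta)$.

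\emph{Sufficiency.} Let $\mathcal{A}$ be an $\mathbb{F}_q$-good zero-set with $|\mathcal{A}|=\delta+1$ and multiplier $\lambda$, and put $\mathcal{B}=\mathcal{A}\setminus\{0\}$. Apply Lemma~\ref{adlemma1} to $\mathcal{D}=\mathcal{A}$, which has $|\mathcal{D}|=\delta+1\ge 3$. The $\beta=0$ term of $\sum_{\beta\in\mathcal{A}}\beta^{j}/P_{\mathcal{A}}'(\beta)$ vanishes for $j\ge1$, so
\[
\sum_{\beta\in\mathcal{B}}\frac{\beta^{j}}{P_{\mathcal{A}}'(\beta)}=0 \quad\text{for } 1\le j\le \delta-1 .
\]
Define $c_\beta=\lambda/P_{\mathcal{A}}'(\beta)$ for $\beta\in\mathcal{B}$ and $c_\beta=0$ otherwise. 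The good zero-set condition gives $P_{\mathcal{A}}'(\beta)\in\lambda\mathbb{F}_q^*$, so each $c_\beta$ lies in $\mathbb{F}_q^*$. By \eqref{verify} this $\mathbf{c}$ belongs to $\mathcal{C}_{(q,m,\delta)}$, and its weight is $|\mathcal{B}|=\delta$. Hence $d\le\delta$, and so $d=\delta$.

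\emph{Necessity.} Let $\mathbf{c}$ be a codeword of weight $\delta$ with support $\mathcal{B}$, and put $\mathcal{A}=\mathcal{B}\cup\{0\}$. By \eqref{verify}, the vector $(c_\beta)_{\beta\in\mathcal{B}}$ is a nonzero solution of the $(\delta-1)\times\delta$ Vandermonde-type system $\sum_{\beta\in\mathcal{B}}x_\beta\beta^{i}=0$ for $1\le i\le\delta-1$. The rows $(\beta^i)$ for $1\le i\le \delta-1$ are linearly independent, since after factoring $\beta$ out of each column one obtains a Vandermonde matrix in distinct nonzero $\beta$. The solution space is therefore one-dimensional. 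By the displayed identity from the sufficiency part, $x_\beta=1/P_{\mathcal{A}}'(\beta)$ is a nonzero solution. Hence there is $\mu\in\mathbb{F}_{q^m}^*$ with $c_\beta=\mu/P_{\mathcal{A}}'(\beta)$, that is, $P_{\mathcal{A}}'(\beta)=\mu c_\beta^{-1}\in\mu\mathbb{F}_q^*$ for all $\beta\in\mathcal{B}$.

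It remains to handle the point $0\in\mathcal{A}$, which is the one genuinely non-mechanical step. Use Lemma~\ref{adlemma1} with $j=0$ on $\mathcal{A}$:
\[
\frac{1}{P_{\mathcal{A}}'(0)}=-\sum_{\beta\in\mathcal{B}}\frac{1}{P_{\mathcal{A}}'(\beta)}=-\mu^{-1}\sum_{\beta\in\mathcal{B}}c_\beta .
\]
We need $\sum_{\beta}c_\beta\neq0$, and this follows from the minimum-distance hypothesis. If $\sum_\beta c_\beta=0$, then $\mathbf{c}$ would also satisfy the $i=0$ parity check. Together with the checks for $1\le i\le\delta-1$, these are $\delta$ independent Vandermonde conditions in the $\delta$ support values, which forces $\mathbf{c}=0$, a contradiction. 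Alternatively, $P_{\mathcal{A}}'(0)=\pm\prod_{\beta\in\mathcal{B}}\beta\neq0$ directly, so $1/P_{\mathcal{A}}'(0)\neq0$ and the sum is automatically nonzero.

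Since $\sum_\beta c_\beta\in\mathbb{F}_q^*$, we get $P_{\mathcal{A}}'(0)\in\mu\mathbb{F}_q^*$. So $\mathcal{A}$ is an $\mathbb{F}_q$-good zero-set with $\lambda=\mu$ and $|\mathcal{A}|=\delta+1$. I expect this last step to be the only subtle point: it is exactly why $0$ is included in the definition of a good zero-set, and it is where Lemma~\ref{adlemma1} is used with $j=0$ rather than $j\ge1$.
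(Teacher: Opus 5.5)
Your proof is correct. The sufficiency half is the paper's argument up to bookkeeping. The paper builds the same codeword, written as $c_\beta=\lambda/(\beta P_{\mathcal{S}}'(\beta))$ with $\mathcal{S}=\mathcal{A}\setminus\{0\}$, and applies Lemma~\ref{adlemma1} to $\mathcal{S}$ with $j=i-1$ rather than to $\mathcal{A}$ with $j=i$. Since $P_{\mathcal{A}}'(\beta)=\beta P_{\mathcal{S}}'(\beta)$, these are the same identity.

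The necessity half takes a genuinely different route. The paper uses no rank computation. It rewrites the checks as $\sum_{\gamma\in\mathcal{S}}c_\gamma f(\gamma)=f(0)\sum_{\gamma\in\mathcal{S}}c_\gamma$ for every $f$ of degree at most $\delta-1$, and evaluates this at $f_\beta=\prod_{\gamma\in\mathcal{S}\setminus\{\beta\}}(x-\gamma)$. That gives directly $c_\beta P_{\mathcal{A}}'(\beta)=-P_{\mathcal{A}}'(0)\sum_{\gamma}c_\gamma$ for every $\beta\in\mathcal{S}$. The result is the explicit multiplier $\lambda=-P_{\mathcal{A}}'(0)\sum_\gamma c_\gamma$, and both $\sum_\gamma c_\gamma\neq0$ and the condition at $0$ drop out of $\lambda\neq0$.

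You instead use that the $(\delta-1)\times\delta$ check matrix on the support has rank $\delta-1$. Its kernel is therefore the line spanned by $(1/P_{\mathcal{A}}'(\beta))_{\beta\in\mathcal{B}}$, a vector Lemma~\ref{adlemma1} already supplies. This buys one structural statement that serves both directions: a weight-$\delta$ codeword with support $\mathcal{B}$ exists if and only if the values $P_{\mathcal{A}}'(\beta)$, $\beta\in\mathcal{B}$, lie in a single coset $\mu\mathbb{F}_q^*$.

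Your $j=0$ step also shows that the derivative condition at $0$ is not really subtle: it is implied by the conditions at the nonzero points. Indeed $1/P_{\mathcal{A}}'(0)=-\sum_{\beta\in\mathcal{B}}1/P_{\mathcal{A}}'(\beta)$ is nonzero and lies in $\mu^{-1}\mathbb{F}_q$. This is consistent with the paper's sufficiency proof, which never uses that condition either. So your second argument for $\sum_\beta c_\beta\neq0$ is the natural one. Your first argument does not need $d=\delta$ as you claim; it only needs $\mathbf{c}\neq\mathbf{0}$ to be supported on $\delta$ points.

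The paper's route, in exchange, is more elementary and computational. It needs neither a dimension count nor Lemma~\ref{adlemma1} in the necessity direction.
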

\begin{proof} 
\textit{Sufficiency.} Suppose that $\mathcal{A}\subseteq \mathbb{F}_{q^m}$ is an $\mathbb{F}_q$-good zero-set with $|\mathcal{A}|=\delta+1$.
By definition, there exists $\lambda\in \mathbb{F}_{q^m}^*$ such that
\begin{equation}\notag     P_{\mathcal{A}}'(\beta)\in\lambda\F_q^*
        \quad \hbox{for all }\beta\in \mathcal{A}.
\end{equation}
Let $\mathcal{S}=\mathcal{A}\setminus\{0\}$. 
It is easy to see that   \begin{equation}\notag
P_{\mathcal{A}}'(\beta)= \beta
P_{\mathcal{S}}^{\prime}(\beta)
\quad \hbox{for all } \beta\in \mathcal{S}.
\end{equation} 
It follows that 
\begin{equation}\notag
   \frac{\lambda}{\beta P_{\mathcal{S}}'(\beta)}\in\mathbb F_q^*
        \quad
       \text{for all    }\beta\in \mathcal{S}.
\end{equation}
Define a vector $\mathbf{c}=(c_\beta)_{\beta\in\mathbb{F}_{q^m}^*}\in\mathbb{F}_q^n$ by
\begin{equation}\notag  c_\beta=\begin{cases}
       0& \hbox{if }\beta\not \in  \mathcal{S},\\
   \frac{\lambda}{\beta P_{\mathcal{S}}'(\beta)}& \hbox{if }\beta\in \mathcal{S}.  
   \end{cases} 
\end{equation}
It is clear that $\wt(\mathbf c)=|\mathcal{S}|=\delta$.  Applying Lemma~\ref{adlemma1} with $(\mathcal{D},j)=(\mathcal{S},i-1)$, we obtain 
\begin{equation}\notag
\begin{aligned}
        \sum_{\beta\in \mathbb{F}_{q^m}^*}c_\beta\beta^i
        = \sum_{\beta\in \mathcal{S}}
\left(\frac{ \lambda }{\beta P_{\mathcal{S}}'(\beta)}\right)\beta^{i}
      = \sum_{\beta\in \mathcal{S}}
\frac{ \lambda \beta^{i-1} }{P_{\mathcal{S}}'(\beta)}
        =
        0 \quad \hbox{ for all }1\le i\leq \delta-1. 
\end{aligned}
\end{equation}
By \eqref{verify}, the vector $\mathbf{c}$ is a codeword of 
$\mathcal{C}_{(q,m,\delta)}$. The BCH bound now gives  $d(\mathcal{C}_{(q,m,\delta)}) = \delta$. 

\textit{Necessity.}
Suppose that $d(\mathcal{C}_{(q,m,\delta)})=\delta$, and let
$(c_{\gamma})_{\gamma\in \mathbb{F}_{q^m}^*}\!\!\in\mathcal{C}_{(q,m,\delta)}$ be a codeword of
weight $\delta$.
Let 
$\mathcal{S}=\{\gamma\in\mathbb{F}_{q^m}^*:c_\gamma\neq 0\}$
and $\mathcal{A}=\mathcal{S}\cup\{0\}$.
Then $|\mathcal{S}|=\delta$ and $|\mathcal{A}|=\delta+1$.
By \eqref{verify}, we obtain 
\begin{equation*}
\sum_{\gamma\in\mathbb{F}_{q^m}^*}c_\gamma\gamma^i= \sum_{\gamma\in\mathcal{S}}c_\gamma\gamma^i=0
\quad
\text{for all }1\leq i\leq\delta-1.
\end{equation*}
By linearity, it follows that
\begin{equation}\label{lsb}
\sum_{\gamma\in\mathcal{S}}c_\gamma f(\gamma)
=
f(0)\sum_{\gamma\in\mathcal{S}}c_\gamma
\end{equation}
for any  polynomial $f\in\mathbb{F}_{q^m}[x]$
with  $\mathrm{deg}(f)\leq \delta-1$.

For each $\beta\in\mathcal{S}$, define the polynomial 
\begin{equation*}
f_\beta(x)
=
\prod_{\gamma\in\mathcal{S}\setminus\{\beta\}}(x-\gamma).
\end{equation*}
It is clear that  $\mathrm{deg}(f_\beta(x))=\delta-1$ and $f_\beta(\gamma)=0$ for any $\gamma\in \mathcal{S}\setminus\{\beta\}$.
Note that  $P_{\mathcal{A}}(x)=x(x-\beta)f_\beta(x)$. Thus, we have
\begin{equation*}
f_\beta(\beta)=\frac{P_{\mathcal{A}}'(\beta)}{\beta}
\quad\text{and}\quad
f_\beta(0)=-\frac{P_{\mathcal{A}}'(0)}{\beta}.
\end{equation*}
Applying the  identity \eqref{lsb} to $f_\beta$ gives
\begin{equation*}
c_\beta f_{\beta}(\beta)
=\sum_{\gamma\in\mathcal{S}}c_\gamma f_{\beta}(\gamma)= 
f_\beta(0)
\sum_{\gamma\in\mathcal{S}}c_\gamma
\quad
\text{for all }\beta\in\mathcal{S}.
\end{equation*} 
Consequently, we have 
$$c_{\beta}P'_{\mathcal{A}}(\beta)=-P_{\mathcal{A}}'(0)
\sum_{\gamma\in\mathcal{S}}c_\gamma \quad \text{for all }\beta\in\mathcal{S}.$$
Let
$\lambda=-P_{\mathcal{A}}'(0)
\sum_{\gamma\in\mathcal{S}}c_\gamma$.
Noting that  $c_\beta\neq 0$ and
$P_{\mathcal{A}}'(\beta)\neq 0$ for all
$\beta\in\mathcal{S}$,  we have 
\begin{equation*}
\frac{\lambda}{P_{\mathcal{A}}'(\beta)}
=
c_\beta\in\mathbb{F}_q^*
\qquad
\text{for all }\beta\in\mathcal{S}.
\end{equation*}
Moreover, since $P_{\mathcal{A}}'(0)\neq 0$, we have
\begin{equation*}
\frac{\lambda}{P_{\mathcal{A}}'(0)}
=
-\sum_{\gamma\in\mathcal{S}}c_\gamma
\in\mathbb{F}_q^*.
\end{equation*}
Therefore,
$P_{\mathcal{A}}'(\beta)\in\lambda\mathbb{F}_q^*$
for all $\beta\in\mathcal{A}$.
Hence $\mathcal{A}$ is an $\mathbb{F}_q$-good zero-set
of cardinality $\delta+1$. This completes the necessity part.
\end{proof}

\section{Constructions of \texorpdfstring{$\mathbb{F}_q$}{Fq}-good zero-sets}
In this section, we present several methods for constructing
$\mathbb{F}_q$-good zero-sets under suitable conditions.
We first introduce some notation for the constructions below.
Let $p$ be a prime, and let $\mathbb{F}$ be a finite extension
of $\mathbb{F}_p$.
For an $\mathbb{F}_p$-vector space $V$, we write
$\dim_{\mathbb{F}_p}(V)$ for its dimension.
For a subset $\mathcal{A}\subseteq\mathbb{F}$, we denote by
$\Span_{\mathbb{F}_p}(\mathcal{A})$ its $\mathbb{F}_p$-linear span,
consisting of all finite $\mathbb{F}_p$-linear combinations
of elements of $\mathcal{A}$.
The trace map
$\Tr_{\mathbb{F}/\mathbb{F}_p}\colon\mathbb{F}\to\mathbb{F}_p$
is defined by
\begin{equation*}
    \Tr_{\mathbb{F}/\mathbb{F}_p}(x)
    = \sum_{i=0}^{\ell-1}x^{p^i},
\end{equation*}
where $\ell=[\mathbb{F}:\mathbb{F}_p]$ is the extension degree.
For subsets $V,W\subseteq\mathbb{F}$ and a map $L\colon V\to W$,
we denote its kernel and image by
\[
\ker(L)=\{v\in V:L(v)=0\},\qquad
\operatorname{Im}(L)=\{L(v):v\in V\},
\]
respectively. For $S\subseteq V$, we write $L|_S$ for the
restriction of $L$ to $S$.

\subsection{\texorpdfstring{$p$}{p}-linearized polynomial substitution}

The following lemma is needed for our first construction. Its proof is given in the appendix.
\begin{lemma}\label{lemma2}
Let $q=p^e$ for some prime $p$ and positive integer $e$, and  
let $\mathbb{F}$ be a finite extension of $\mathbb{F}_p$.  Suppose that  
   $\alpha\in \mathbb{F}$ satisfies  $\Tr_{\mathbb{F}/\mathbb{F}_{p}}(\alpha)=0$.
   Then the polynomial $y^p-y-\alpha$ splits completely over $\mathbb{F}$. 
\end{lemma}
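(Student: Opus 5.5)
The plan is to exhibit one root and then translate the whole root set by $\mathbb{F}_p$. The polynomial $y^p-y-\alpha$ has degree $p$. If $y_0\in\mathbb{F}$ is a root, then every $y_0+c$ with $c\in\mathbb{F}_p$ is also a root, because $(y_0+c)^p-(y_0+c)=y_0^p-y_0+c^p-c=y_0^p-y_0$. These $p$ elements are distinct, so they are exactly the $p$ roots, and the polynomial splits completely over $\mathbb{F}$. It therefore suffices to show that $\alpha$ lies in the image of the map $L(y)=y^p-y$ on $\mathbb{F}$.

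First, $L$ is $\mathbb{F}_p$-linear on $\mathbb{F}$, since the Frobenius map is additive and fixes $\mathbb{F}_p$. Its kernel is the set of roots of $y^p-y$ in $\mathbb{F}$, which is exactly $\mathbb{F}_p$. With $\ell=[\mathbb{F}:\mathbb{F}_p]$, rank–nullity gives $\dim_{\mathbb{F}_p}\operatorname{Im}(L)=\ell-1$.

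Next, $\operatorname{Im}(L)\subseteq\ker(\Tr_{\mathbb{F}/\mathbb{F}_p})$. Indeed,
\[
\Tr_{\mathbb{F}/\mathbb{F}_p}(y^p-y)=\sum_{i=0}^{\ell-1}y^{p^{i+1}}-\sum_{i=0}^{\ell-1}y^{p^i}=y^{p^\ell}-y=0,
\]
because $y^{p^\ell}=y$ for all $y\in\mathbb{F}$.

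Finally, the trace is a nonzero $\mathbb{F}_p$-linear functional. As a polynomial it has degree $p^{\ell-1}<p^\ell=|\mathbb{F}|$, so it cannot vanish on all of $\mathbb{F}$. Hence its kernel has dimension exactly $\ell-1$. Since $\operatorname{Im}(L)$ is contained in this kernel and has the same dimension, the two subspaces coincide. Thus any $\alpha$ with trace zero equals $y_0^p-y_0$ for some $y_0\in\mathbb{F}$, and the translation argument above finishes the proof.

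The main point needing care is the dimension count, namely the nonvanishing of the trace. It is handled by the degree bound.
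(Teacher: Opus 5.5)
Your proof is correct and follows essentially the same route as the paper's: show that $\operatorname{Im}(y\mapsto y^p-y)=\ker(\Tr_{\mathbb{F}/\mathbb{F}_p})$ by an inclusion plus a rank--nullity dimension count, then translate one root by $\mathbb{F}_p$ to get all $p$ roots. The only difference is that you explicitly justify that the trace is nonzero via the degree bound $p^{\ell-1}<p^\ell$, whereas the paper simply asserts $\operatorname{Im}(\Tr_{\mathbb{F}/\mathbb{F}_p})=\mathbb{F}_p$.
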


\begin{theorem}\label{theorem2}
Let $q=p^e$ for some prime $p$ and positive integer $e$, and let $\mathbb{F}$ be a finite extension  of $\mathbb{F}_q$. Suppose that  
$\mathcal{A}\subseteq\mathbb{F}$ is an $\mathbb{F}_q$-good zero-set satisfying  
$\Span_{\mathbb{F}_p}(\mathcal{A})\neq\mathbb{F}$.  Then there exists $u\in \mathbb{F}^*$ such that 
\begin{equation}\notag
    \mathcal{A}^{+} =\{x\in \mathbb{F}: L_u(x)\in \mathcal{A}\} 
\end{equation}  
is an $\mathbb{F}_q$-good zero-set, where $L_u(x)=x^{p}-u^{p-1}x$. Moreover, the set $\mathcal{A}^+$ satisfies 
\begin{equation}\notag
    |\mathcal{A}^{+}| = p|\mathcal{A}|  
\quad \text{and}\quad 
\dim_{\mathbb{F}_p} \Span_{\mathbb{F}_p}(\mathcal{A}^{+}) =\dim_{\mathbb{F}_p} \Span_{\mathbb{F}_p}(\mathcal{A}) + 1. 
\end{equation}
\end{theorem}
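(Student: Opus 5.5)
The plan is to take $V=\Span_{\mathbb{F}_p}(\mathcal{A})$, pick $u\in\mathbb{F}^*$ so that $L_u$ maps onto a hyperplane containing $V$, and then compute $P_{\mathcal{A}^+}$ and its derivative directly via the chain rule.

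\textbf{Choice of $u$.} Since $V\neq\mathbb{F}$, there is an $\mathbb{F}_p$-hyperplane $H$ of $\mathbb{F}$ with $V\subseteq H$. Every hyperplane has the form $H=\{x:\Tr_{\mathbb{F}/\mathbb{F}_p}(wx)=0\}$ for some $w\in\mathbb{F}^*$. The map $L_u$ is $\mathbb{F}_p$-linear with kernel $u\mathbb{F}_p$, so its image is a hyperplane. We choose $u$ so that $\operatorname{Im}(L_u)=H$. Writing $x=uy$ gives $L_u(uy)=u^p(y^p-y)$, so $\operatorname{Im}(L_u)=u^p\{y^p-y\}$. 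By Lemma~\ref{lemma2} (together with a count of the image of $y\mapsto y^p-y$), $\{y^p-y:y\in\mathbb{F}\}$ is exactly the trace-zero hyperplane. Hence $\operatorname{Im}(L_u)=\{z:\Tr(u^{-p}z)=0\}$, and it suffices to take $u^{-p}=w$, which is possible since Frobenius is bijective.

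\textbf{Size and span.} Every $a\in\mathcal{A}\subseteq H$ has exactly $p$ preimages under $L_u$, forming a coset of $u\mathbb{F}_p$. So $|\mathcal{A}^+|=p|\mathcal{A}|$, and $0\in\mathcal{A}^+$.

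For the span, we have $L_u(\Span\mathcal{A}^+)=\Span L_u(\mathcal{A}^+)=V$, and $\ker L_u=u\mathbb{F}_p\subseteq\Span\mathcal{A}^+$, because $0\in\mathcal{A}$ forces $u\mathbb{F}_p\subseteq\mathcal{A}^+$. Therefore $\Span\mathcal{A}^+=L_u^{-1}(V)$, whose dimension is $\dim V+1$.

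\textbf{Goodness.} Each $a\in\mathcal{A}$ satisfies $P_{\mathcal{A}}(L_u(x))=\prod_{a}(L_u(x)-a)$, and each factor $L_u(x)-a$ is monic of degree $p$ with $p$ distinct roots in $\mathbb{F}$. Hence
\[
P_{\mathcal{A}^+}(x)=P_{\mathcal{A}}(L_u(x)).
\]
By the chain rule, and since $L_u'(x)=-u^{p-1}$ in characteristic $p$,
\[
P_{\mathcal{A}^+}'(x)=-u^{p-1}P_{\mathcal{A}}'(L_u(x)).
\]
So for $b\in\mathcal{A}^+$ we get $P_{\mathcal{A}^+}'(b)\in-u^{p-1}\lambda\mathbb{F}_q^*=\lambda^+\mathbb{F}_q^*$, where $\lambda^+=u^{p-1}\lambda\neq0$. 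Thus $\mathcal{A}^+$ is an $\mathbb{F}_q$-good zero-set.

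The only delicate step is the choice of $u$: one must identify $\operatorname{Im}(L_u)$ with a prescribed hyperplane. This rests on Lemma~\ref{lemma2} together with the fact that every hyperplane is a scaled trace kernel.
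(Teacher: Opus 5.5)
Your proof is correct and follows the paper's argument essentially step for step. The shared steps are: choosing $u$ so that $u^{-p}$ is a trace-dual element annihilating $\Span_{\mathbb{F}_p}(\mathcal{A})$; invoking Lemma~\ref{lemma2} through the substitution $x=uy$; the chain-rule identity $P_{\mathcal{A}^+}'(x)=-u^{p-1}P_{\mathcal{A}}'(L_u(x))$; and rank--nullity with $\ker(L_u)=u\mathbb{F}_p\subseteq\mathcal{A}^+$ for the span. The only cosmetic difference is how distinctness of the $p|\mathcal{A}|$ roots is obtained: you get it from the coset structure of the fibres of $L_u$, whereas the paper deduces simplicity of the roots from the nonvanishing of the derivative.
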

\begin{proof}
Since $\Span_{\mathbb{F}_p}(\mathcal{A})\neq\mathbb{F}$, there exists a nonzero element
$c\in\mathbb F$ such that
$\Tr_{\mathbb F/\mathbb F_p}(c\alpha)=0$ for all $\alpha\in \mathcal{A}$.
Moreover, noting that the map $x\mapsto x^p$ is surjective on
$\mathbb F$, there exists $u\in\mathbb F^*$ such that $u^{-p}=c$.
Then
\[
L_u(x)-\alpha=u^p(y^p-y-\alpha c),
\]
where $y=x/u$.
By Lemma~\ref{lemma2}, the polynomial  $L_u(x)-\alpha$ splits completely
over $\mathbb F$ for every $\alpha\in\mathcal A$.
Hence
\begin{equation}\notag
P_{\mathcal A}(L_u(x))
=\prod_{\alpha\in\mathcal A}\bigl(L_u(x)-\alpha\bigr)
\end{equation}
splits completely over $\mathbb F$, and $\mathcal A^+$
is the set of all roots of $P_{\mathcal A}(L_u(x))$.

We next show that $\mathcal{A}^{+}$ is an
$\mathbb{F}_q$-good zero-set.
Since $\mathcal{A}$ is an $\mathbb{F}_q$-good zero-set,
we have $0\in\mathcal{A}$, and there exists
$\lambda\in\mathbb{F}^{*}$ such that
\begin{equation}\notag
P_{\mathcal{A}}'(\alpha)\in\lambda\mathbb{F}_q^{*}
\qquad\text{for all }\alpha\in\mathcal{A}.
\end{equation}
Since $L_u(0)=0$, we also have $0\in\mathcal{A}^{+}$.
For every $x\in\mathcal{A}^{+}$, we have
$L_u(x)\in\mathcal{A}$, and hence
\begin{equation}\label{equation6}
\bigl(P_{\mathcal A}(L_u(x))\bigr)'
=-u^{p-1}P_{\mathcal{A}}'(L_u(x))
\in -u^{p-1}\lambda\mathbb{F}_q^{*}.
\end{equation}
In particular, every root of $P_{\mathcal{A}}(L_u(x))$ is simple.
Since this polynomial is monic and has root set $\mathcal{A}^{+}$,
it follows that
\begin{equation}\notag
P_{\mathcal{A}^{+}}(x)=P_{\mathcal{A}}(L_u(x)).
\end{equation}
Together with \eqref{equation6} and $0\in\mathcal{A}^{+}$,
this shows that $\mathcal{A}^{+}$ is an
$\mathbb{F}_q$-good zero-set.
Taking degrees in the above  identity  gives \begin{equation}\notag
\deg P_{\mathcal{A}^{+}}(x)
=p\deg P_{\mathcal{A}}(x).
\end{equation}
Therefore, we have $|\mathcal{A}^+|=p|\mathcal{A}|$.

Finally, we establish the desired relation between the dimensions of $\Span_{\mathbb{F}_p}(\mathcal{A}^{+})$ and $\Span_{\mathbb{F}_p}(\mathcal{A})$. 
By the choice of $u$ and the definition of $\mathcal{A}^{+}$,
we have $L_u(\mathcal{A}^{+})=\mathcal{A}$.
Note that $L_u$ is an $\mathbb{F}_p$-linear map. Thus, it follows that
\begin{equation}\notag
L_u(\operatorname{Span}_{\mathbb{F}_p}(\mathcal{A}^{+}))
=\operatorname{Span}_{\mathbb{F}_p}
  \bigl(L_u(\mathcal{A}^{+})\bigr)
=\operatorname{Span}_{\mathbb{F}_p}(\mathcal{A})
.
\end{equation}
Since $u\neq 0$,  
$L_u(x)=0$  if and only if 
$
    \left(\frac{x}{u}\right)^p - \left(\frac{x}{u}\right)=0,
$
which is further equivalent to 
${x}/{u}\in \mathbb{F}_p$. Therefore,  we have $\ker(L_u)=u\mathbb{F}_p$.
Moreover,  $0\in\mathcal{A}$
implies that $\ker(L_u)\subseteq\mathcal{A}^{+}\subseteq \operatorname{Span}_{\mathbb{F}_p}(\mathcal{A}^{+})$.
Thus
\begin{equation}\notag
\ker\bigl(L_u|_{\operatorname{Span}_{\mathbb{F}_p}(\mathcal{A}^{+})}\bigr)=\ker(L_u)=u\mathbb{F}_p.
\end{equation}
Applying the rank--nullity theorem to
$L_u|_{\operatorname{Span}_{\mathbb{F}_p}(\mathcal{A}^{+})}\colon \operatorname{Span}_{\mathbb{F}_p}(\mathcal{A}^{+})\to \operatorname{Span}_{\mathbb{F}_p}(\mathcal{A})$, we obtain
\begin{equation}\notag
\dim_{\mathbb{F}_p}\operatorname{Span}_{\mathbb{F}_p}(\mathcal{A}^{+})
=\dim_{\mathbb{F}_p}\operatorname{Span}_{\mathbb{F}_p}(\mathcal{A})+1.
\end{equation}
This completes the proof.
\end{proof}

\begin{corollary}\label{corollary2}
Let $q=p^e$ for some prime $p$ and positive integer $e$,
and let  $s$ be a positive integer. Suppose that $\mathcal{A}\subseteq \mathbb{F}_{q^s}$ is  an $\mathbb{F}_q$-good zero-set   satisfying  $\dim_{\mathbb{F}_p} \Span_{\mathbb{F}_p}(\mathcal{A})\leq r $ for some integer $r$.  Then   for any integer $u$ with $0\leq u\leq es-r$, there exists an $\mathbb{F}_q$-good zero-set $\mathcal{A}_{u}\subseteq \mathbb{F}_{q^s}$ such that  
\begin{equation}\label{equation7}
   |\mathcal{A}_{u}|=p^u|\mathcal{A}| 
\quad \text{and} \quad 
    \dim_{\mathbb{F}_p} \Span_{\mathbb{F}_p}(\mathcal{A}_{u}) =\dim_{\mathbb{F}_p} \Span_{\mathbb{F}_p}(\mathcal{A}) + u.  \end{equation}
\end{corollary}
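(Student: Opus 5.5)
We argue by induction on $u$, applying Theorem~\ref{theorem2} repeatedly inside the fixed field $\mathbb{F}=\mathbb{F}_{q^s}$. Note that $\mathbb{F}_{q^s}$ is an $\mathbb{F}_p$-vector space of dimension $es$.

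\emph{Base case.} For $u=0$ we take $\mathcal{A}_0=\mathcal{A}$. Then \eqref{equation7} holds trivially.

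\emph{Inductive step.} Suppose that $0\le u<es-r$ and that an $\mathbb{F}_q$-good zero-set $\mathcal{A}_u\subseteq\mathbb{F}_{q^s}$ satisfying \eqref{equation7} has been constructed. By the induction hypothesis and the assumption on $\mathcal{A}$,
\[
\dim_{\mathbb{F}_p}\Span_{\mathbb{F}_p}(\mathcal{A}_u)=\dim_{\mathbb{F}_p}\Span_{\mathbb{F}_p}(\mathcal{A})+u\le r+u<es .
\]
Hence $\Span_{\mathbb{F}_p}(\mathcal{A}_u)\neq\mathbb{F}_{q^s}$, which is exactly the hypothesis of Theorem~\ref{theorem2} with $\mathbb{F}=\mathbb{F}_{q^s}$. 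That theorem yields an element $w\in\mathbb{F}_{q^s}^*$ such that
\[
\mathcal{A}_{u+1}:=\{x\in\mathbb{F}_{q^s}: L_w(x)\in\mathcal{A}_u\}
\]
is an $\mathbb{F}_q$-good zero-set. Theorem~\ref{theorem2} also gives $|\mathcal{A}_{u+1}|=p|\mathcal{A}_u|=p^{u+1}|\mathcal{A}|$, and the span dimension of $\mathcal{A}_{u+1}$ is one more than that of $\mathcal{A}_u$, namely $\dim_{\mathbb{F}_p}\Span_{\mathbb{F}_p}(\mathcal{A})+u+1$. So \eqref{equation7} holds for $u+1$, and by induction it holds for every $u$ with $0\le u\le es-r$.

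\emph{The delicate point.} The substitution parameter in Theorem~\ref{theorem2} (also called $u$ there) is a field element, whereas $u$ in the corollary is an integer counter. We rename the field element to $w$ to keep the two apart. The only real check is that the span condition survives every step up to $u=es-r$. It does: at the last step we need $u=es-r-1$, which gives dimension at most $es-1<es$.
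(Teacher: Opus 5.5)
Your proof is correct and takes the same route as the paper. Both argue by induction on $u$, using the bound $\dim_{\mathbb{F}_p}\Span_{\mathbb{F}_p}(\mathcal{A}_u)\le r+u<es$ to get $\Span_{\mathbb{F}_p}(\mathcal{A}_u)\neq\mathbb{F}_{q^s}$, so that Theorem~\ref{theorem2} can be applied at each step. Renaming the field parameter to $w$ is a helpful clarification, not a substantive difference.
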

\begin{proof}
We prove the assertion by induction on $u$. When $u=0$, we may take
$\mathcal{A}_{0}=\mathcal{A}$, and the assertion is immediate.

Now assume that $u\geq 1$ and that the assertion holds for $u-1$. That is, 
there exists an $\mathbb{F}_q$-good zero-set $\mathcal{A}_{u-1}\subseteq \mathbb{F}_{q^s}$
such that
\begin{equation}\notag
|\mathcal{A}_{u-1}|=p^{u-1}|\mathcal{A}|
\quad \hbox{and}\quad 
\dim_{\mathbb{F}_p}\Span_{\mathbb{F}_p}(\mathcal{A}_{u-1})
=
\dim_{\mathbb{F}_p}\Span_{\mathbb{F}_p}(\mathcal{A})+u-1.
\end{equation}
Since $u\leq es-r$ and
$
\dim_{\mathbb{F}_p}\Span_{\mathbb{F}_p}(\mathcal{A})\leq r,
$
we have
\begin{equation}\notag
\dim_{\mathbb{F}_p}\Span_{\mathbb{F}_p}(\mathcal{A}_{u-1})
\leq r+u-1
\leq es-1.
\end{equation}
Moreover,  since  $q=p^e$, we have
$
\dim_{\mathbb{F}_p}\mathbb{F}_{q^s}=es.
$
Thus, we have 
$
\Span_{\mathbb{F}_p}(\mathcal{A}_{u-1})\neq \mathbb{F}_{q^s}.
$
Then we can apply Theorem~\ref{theorem2} to obtain an
$\mathbb{F}_q$-good zero-set $\mathcal{A}_{u}\subseteq \mathbb{F}_{q^s}$ such that
\begin{equation}\notag
|\mathcal{A}_{u}|=p|\mathcal{A}_{u-1}|=p^u|\mathcal{A}|
\end{equation}
and
\begin{equation}\notag
\dim_{\mathbb{F}_p}\Span_{\mathbb{F}_p}(\mathcal{A}_{u})
=
\dim_{\mathbb{F}_p}\Span_{\mathbb{F}_p}(\mathcal{A}_{u-1})+1
=
\dim_{\mathbb{F}_p}\Span_{\mathbb{F}_p}(\mathcal{A})+u.
\end{equation}
Hence $\mathcal{A}_{u}$ satisfies \eqref{equation7}. This completes
the induction and hence the proof.
\end{proof}

\subsection{\texorpdfstring{$v$}{v}-th-power  construction} 
The proof of the following Lemma is given in the appendix.
\begin{lemma}\label{lemmas4}
Let $q=p^e$ for some prime $p$ and positive integer $e$,
let  $s$ and $h$ be positive integers such that $s\mid h$,  and let 
 $v$ be a positive divisor of $\frac{q^h-1}{q^s-1}$.  Then  for any 
$\alpha\in\mathbb{F}_{q^s}^{*}$, the polynomial  $x^v-\alpha$
has $v$ distinct nonzero roots in 
$\mathbb{F}_{q^h}$. Moreover, for  $\alpha, \alpha'\in \mathbb{F}_{q^s}^{*}$ with $\alpha\neq \alpha'$,  
we have 
\begin{equation}\label{dis}
    \{x\in \mathbb{F}_{q^h}:x^v-\alpha=0\}\cap \{x\in \mathbb{F}_{q^h}:x^v-\alpha'=0 \}=\varnothing.
\end{equation}
\end{lemma}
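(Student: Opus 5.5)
The plan is to use the cyclic structure of $\mathbb{F}_{q^h}^*$, which has order $q^h-1$. Since $s\mid h$, we have $(q^s-1)\mid(q^h-1)$, so $\frac{q^h-1}{q^s-1}$ is an integer. By hypothesis $v$ divides it, so $v\mid q^h-1$ as well.

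\emph{Existence of $v$ roots.} Fix a generator $\gamma$ of $\mathbb{F}_{q^h}^*$ and write $N=\frac{q^h-1}{q^s-1}$. The subgroup $\mathbb{F}_{q^s}^*$ of the cyclic group $\mathbb{F}_{q^h}^*$ is the unique subgroup of order $q^s-1$, namely $\langle\gamma^{N}\rangle$. So any $\alpha\in\mathbb{F}_{q^s}^*$ can be written as $\alpha=\gamma^{Nk}$ for some integer $k$. Because $v\mid N$, the element $x_0=\gamma^{(N/v)k}$ satisfies $x_0^v=\alpha$. The roots of $x^v=1$ in $\mathbb{F}_{q^h}$ form the group $\langle\gamma^{(q^h-1)/v}\rangle$, which has exactly $v$ elements since $v\mid q^h-1$. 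Hence the roots of $x^v-\alpha$ are exactly $x_0\zeta$ with $\zeta^v=1$. These are $v$ distinct nonzero elements of $\mathbb{F}_{q^h}$, which is also the maximum possible number of roots for a polynomial of degree $v$.

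\emph{Distinctness, i.e.\ \eqref{dis}.} This part is immediate: if $x$ is a common root, then $\alpha=x^v=\alpha'$, contradicting $\alpha\neq\alpha'$.

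As a side remark, separability could alternatively be checked via the derivative $vx^{v-1}$. This requires $p\nmid v$, which does hold: $v\mid q^h-1$ and $p\nmid q^h-1$. With the counting argument above, however, this check is not needed. I do not expect any step to be a real obstacle; the only point needing care is to justify that $\mathbb{F}_{q^s}^*=\langle\gamma^N\rangle$, which follows from the uniqueness of subgroups of each order in a cyclic group.
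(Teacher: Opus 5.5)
Your proposal is correct and follows essentially the same route as the paper: a primitive element of $\mathbb{F}_{q^h}$, writing $\alpha$ as a power of $\gamma^{N}$ (a generator of $\mathbb{F}_{q^s}^*$), the explicit root $\gamma^{(N/v)k}$, and multiplication by powers of $\gamma^{(q^h-1)/v}$. The disjointness step is the same one-line argument. Your appeal to uniqueness of subgroups in a cyclic group just makes explicit what the paper leaves implicit when it says $\theta^L$ generates $\mathbb{F}_{q^s}^*$.
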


\begin{theorem}\label{theorem3}
Let $q=p^e$ for some prime $p$ and positive integer $e$,
let  $s$ and $h$ be positive integers such that $s\mid h$,  and let 
 $v$ be a positive divisor of $\frac{q^h-1}{q^s-1}$. 
Suppose that $\mathcal{A}\subseteq\mathbb{F}_{q^s}$ is  an $\mathbb{F}_q$-good zero-set.
Then 
\begin{equation}\notag
\mathcal{A}^{+}
=\{x\in\mathbb{F}_{q^h}:x^v\in\mathcal{A}\}
\end{equation}
is an $\mathbb{F}_q$-good zero-set satisfying
\begin{equation}\notag
|\mathcal{A}^{+}|=1+v(|\mathcal{A}|-1).
\end{equation}
\end{theorem}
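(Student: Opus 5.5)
We will identify the vanishing polynomial of $\mathcal{A}^{+}$ explicitly and then compute its derivative by the chain rule. Write $\mathcal{S}=\mathcal{A}\setminus\{0\}$, so that $P_{\mathcal{A}}(x)=xP_{\mathcal{S}}(x)$. The candidate is
\[
Q(x)=x\,P_{\mathcal{S}}(x^v)=x\prod_{\alpha\in\mathcal{S}}(x^v-\alpha).
\]
By Lemma~\ref{lemmas4}, each factor $x^v-\alpha$ with $\alpha\in\mathcal{S}\subseteq\mathbb{F}_{q^s}^*$ has $v$ distinct nonzero roots in $\mathbb{F}_{q^h}$, and the root sets for distinct $\alpha$ are disjoint by \eqref{dis}. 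The only root of $x^v=0$ is $0$, and it is distinct from all of these. Hence $Q$ is monic, splits over $\mathbb{F}_{q^h}$ with simple roots, and its root set is exactly $\{x: x^v\in\mathcal{A}\}=\mathcal{A}^{+}$. It follows that $P_{\mathcal{A}^{+}}=Q$ and $|\mathcal{A}^{+}|=1+v(|\mathcal{A}|-1)$. Clearly $0\in\mathcal{A}^{+}$.

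Next we compute derivatives. Let $\lambda\in\mathbb{F}_{q^s}^*$ be such that $P_{\mathcal{A}}'(a)\in\lambda\mathbb{F}_q^*$ for all $a\in\mathcal{A}$. From $P_{\mathcal{A}}(x)=xP_{\mathcal{S}}(x)$ we get
\[
P_{\mathcal{A}}'(0)=P_{\mathcal{S}}(0),\qquad P_{\mathcal{A}}'(\alpha)=\alpha P_{\mathcal{S}}'(\alpha)\ \text{ for }\alpha\in\mathcal{S}.
\]
The point $0$ gives $Q'(0)=P_{\mathcal{S}}(0)=P_{\mathcal{A}}'(0)\in\lambda\mathbb{F}_q^*$. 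For $x\in\mathcal{A}^{+}\setminus\{0\}$, write $\alpha=x^v\in\mathcal{S}$. Since $P_{\mathcal{S}}(\alpha)=0$, the product rule and chain rule give
\[
Q'(x)=x\cdot vx^{v-1}P_{\mathcal{S}}'(x^v)=v\,\alpha P_{\mathcal{S}}'(\alpha)=v\,P_{\mathcal{A}}'(\alpha)\in v\lambda\mathbb{F}_q^*.
\]

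The main point to check is that $v\neq0$ in $\mathbb{F}_p$, so that $v\in\mathbb{F}_q^*$. This holds because $v$ divides $(q^h-1)/(q^s-1)$, which divides $q^h-1$, and $q^h-1$ is coprime to $p$. Therefore $Q'(x)\in\lambda\mathbb{F}_q^*$ for every $x\in\mathcal{A}^{+}$, with the same $\lambda$, now viewed in $\mathbb{F}_{q^h}^*$.

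Thus $\mathcal{A}^{+}$ is an $\mathbb{F}_q$-good zero-set in $\mathbb{F}_{q^h}$. The only delicate steps are the root-simplicity and disjointness, which Lemma~\ref{lemmas4} supplies, and the nonvanishing of $v$ modulo $p$.
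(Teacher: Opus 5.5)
Your proposal is correct and takes essentially the same approach as the paper. Both identify $P_{\mathcal{A}^+}(x)=x\prod_{\alpha\in\mathcal{A}\setminus\{0\}}(x^v-\alpha)$ via Lemma~\ref{lemmas4}, verify $P_{\mathcal{A}^+}'(0)=P_{\mathcal{A}}'(0)$ and $P_{\mathcal{A}^+}'(\gamma)=v\,P_{\mathcal{A}}'(\gamma^v)$ for $\gamma\neq 0$, and conclude using $p\nmid v$. The only difference is cosmetic: you route the derivative through $P_{\mathcal{S}}$ and the chain rule, whereas the paper expands the product directly.
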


\begin{proof}
Since $\mathcal{A}$ is an $\mathbb{F}_q$-good zero-set,
we have $0\in\mathcal{A}$ and hence $0\in\mathcal{A}^{+}$.
By Lemma~\ref{lemmas4}, for every
$\alpha\in\mathcal{A}\setminus\{0\}$, the equation $x^v=\alpha$
has exactly $v$ distinct nonzero solutions in $\mathbb{F}_{q^h}$.
Moreover, the solution sets corresponding to distinct values
of $\alpha$ are pairwise disjoint. 
Therefore, 
\begin{equation}\notag
|\mathcal{A}^{+}|=1+v(|\mathcal{A}|-1),
\end{equation}
and the vanishing polynomial of $\mathcal{A}^{+}$ is
\begin{equation}\notag
P_{\mathcal{A}^{+}}(x)
=x\prod_{\alpha\in\mathcal{A}\setminus\{0\}}(x^v-\alpha).
\end{equation}

It remains to show that $\mathcal{A}^{+}$ is an
$\mathbb{F}_q$-good zero-set.
Since $\mathcal{A}$ is an $\mathbb{F}_q$-good zero-set,
there exists $\lambda\in\mathbb{F}_{q^s}^{*}$ such that
\begin{equation}\notag
P_{\mathcal{A}}'(\alpha)\in\lambda\mathbb{F}_q^{*}
\qquad\text{for all }\alpha\in\mathcal{A}.
\end{equation}
A direct calculation yields
\begin{equation}\label{ke}
P_{\mathcal{A}^{+}}'(0) = \prod_{\alpha\in \mathcal{A}\setminus\{0\}}(-\alpha) = P_{\mathcal{A}}'(0) \in \lambda \mathbb{F}_q^*. 
\end{equation}
For any $\gamma\in\mathcal{A}^{+}\setminus\{0\}$, we have 
$\gamma^v\in\mathcal{A}\setminus\{0\}$, and 
\begin{equation}\notag
\begin{aligned}
P_{\mathcal{A}^{+}}'(\gamma)
&= v\gamma^v \prod_{\substack{\alpha\in \mathcal{A}\setminus\{0\} \\ \alpha\neq\gamma^v}} (\gamma^v-\alpha) \\
&= vP_{\mathcal{A}}'(\gamma^v)\in \lambda v\mathbb{F}_q^*.
\end{aligned} 
\end{equation}
Moreover, since $v$ is a divisor of $\frac{q^h-1}{q^s-1}$ and $q=p^e$,  we have $p\nmid v$ and hence $v\mathbb{F}_q^{*}=\mathbb{F}_q^{*}$.
Thus
\begin{equation}\notag
P_{\mathcal{A}^{+}}'(\gamma)\in\lambda\mathbb{F}_q^{*}
\qquad\text{for all }\gamma\in\mathcal{A}^{+}\setminus\{0\}.
\end{equation}
Together with $0\in\mathcal{A}^{+}$ and \eqref{ke}, this proves that
$\mathcal{A}^{+}$ is an $\mathbb{F}_q$-good zero-set.
\end{proof}

\subsection{Shifted inverse construction}
\begin{theorem}\label{theorem4}
Let $q=p^e$ for some prime $p$ and positive integer $e$, let
$\mathbb{F}$ be a finite extension of $\mathbb{F}_q$, and let
$\mathcal{A}\subseteq\mathbb{F}$ be a strict
$\mathbb{F}_q$-good zero-set. Suppose that there exists
$\beta\in\mathbb{F}\setminus\mathcal{A}$ such that
\begin{equation}\label{equation12}
(\alpha-\beta)^{|\mathcal{A}|-1}\in\mathbb{F}_q^*
\qquad\text{for all }\alpha\in\mathcal{A}.
\end{equation}
Then
\begin{equation}\notag
\mathcal{A}^{+}
=
\{0\}\cup
\left\{
\frac{1}{\alpha-\beta}:\alpha\in\mathcal{A}
\right\}
\end{equation}
is an $\mathbb{F}_q$-good zero-set satisfying
$
|\mathcal{A}^{+}|=|\mathcal{A}|+1.
$ 
\end{theorem}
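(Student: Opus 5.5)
The plan is a direct computation: write down the vanishing polynomial of $\mathcal{A}^{+}$ explicitly, evaluate its derivative at each point, and show that every value lies in $\lambda\mathbb{F}_q^*$ for the single choice $\lambda=1/P_{\mathcal{A}}(\beta)$. Set $n=|\mathcal{A}|$ and $\gamma_\alpha=(\alpha-\beta)^{-1}$ for $\alpha\in\mathcal{A}$. These are well defined since $\beta\notin\mathcal{A}$.

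\emph{Cardinality.} The map $\alpha\mapsto\gamma_\alpha$ is injective, and each $\gamma_\alpha$ is nonzero. Hence $\mathcal{A}^{+}$ has exactly $n+1$ elements, it contains $0$, and
\[
P_{\mathcal{A}^{+}}(x)=x\prod_{\alpha\in\mathcal{A}}(x-\gamma_\alpha).
\]
We will also use $\prod_{\alpha\in\mathcal{A}}(\alpha-\beta)=(-1)^nP_{\mathcal{A}}(\beta)$, which is nonzero.

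\emph{Derivative at $0$.}
\[
P_{\mathcal{A}^{+}}'(0)=\prod_{\alpha\in\mathcal{A}}(-\gamma_\alpha)=\frac{(-1)^n}{\prod_{\alpha}(\alpha-\beta)}=\frac{1}{P_{\mathcal{A}}(\beta)}.
\]

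\emph{Derivative at $\gamma_a$, $a\in\mathcal{A}$.} We have
\[
P_{\mathcal{A}^{+}}'(\gamma_a)=\gamma_a\prod_{\alpha\neq a}(\gamma_a-\gamma_\alpha),\qquad \gamma_a-\gamma_\alpha=\frac{\alpha-a}{(a-\beta)(\alpha-\beta)}.
\]
The numerators multiply to $\prod_{\alpha\neq a}(\alpha-a)=(-1)^{n-1}P_{\mathcal{A}}'(a)$. The denominators multiply to
\[
(a-\beta)^{n-1}\prod_{\alpha\neq a}(\alpha-\beta)=(a-\beta)^{n-2}(-1)^nP_{\mathcal{A}}(\beta).
\]
Including the leading factor $\gamma_a=(a-\beta)^{-1}$ gives
\[
P_{\mathcal{A}^{+}}'(\gamma_a)=-\frac{P_{\mathcal{A}}'(a)}{(a-\beta)^{n-1}P_{\mathcal{A}}(\beta)}.
\]

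\emph{Conclusion.} Because $\mathcal{A}$ is strict, $P_{\mathcal{A}}'(a)\in\mathbb{F}_q^*$. By \eqref{equation12}, $(a-\beta)^{n-1}\in\mathbb{F}_q^*$. So $P_{\mathcal{A}^{+}}'(\gamma_a)\in\lambda\mathbb{F}_q^*$ with $\lambda=1/P_{\mathcal{A}}(\beta)$, and the derivative at $0$ equals $\lambda$ itself. Hence $\mathcal{A}^{+}$ is an $\mathbb{F}_q$-good zero-set of cardinality $|\mathcal{A}|+1$.

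The only delicate point is the bookkeeping of signs and of the powers of $(a-\beta)$. Everything hinges on the exponent coming out as exactly $n-1$, which is precisely what hypothesis \eqref{equation12} controls. I will double-check that count, since the derivation combines the $n-1$ factors $(a-\beta)$ in the denominator, the $1/(a-\beta)$ removed from $\prod_\alpha(\alpha-\beta)$, and the leading factor $\gamma_a$. Strictness is essential here: without it, $P_{\mathcal{A}}'(a)$ would lie only in $\mu\mathbb{F}_q^*$, and the resulting scalar $\mu/P_{\mathcal{A}}(\beta)$ at the points $\gamma_a$ would not in general match $1/P_{\mathcal{A}}(\beta)$ at $0$.
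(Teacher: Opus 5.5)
Your proposal is correct and follows essentially the same direct computation as the paper. The paper works with the ratio $P_{\mathcal{A}^{+}}'(y_\alpha)/P_{\mathcal{A}^{+}}'(0)=-P_{\mathcal{A}}'(\alpha)/(\alpha-\beta)^{|\mathcal{A}|-1}$, which is exactly your formula divided by $P_{\mathcal{A}^{+}}'(0)=1/P_{\mathcal{A}}(\beta)$, and your sign and exponent bookkeeping checks out.
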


\begin{proof}
For each $\alpha\in\mathcal{A}$, set
$
y_\alpha=\frac{1}{\alpha-\beta}.
$ 
Since $\beta\notin\mathcal{A}$, each  $y_\alpha$ is well defined and
nonzero. Moreover, for $\alpha,\alpha'\in\mathcal{A}$,
the equality $y_\alpha=y_{\alpha'}$ implies $\alpha=\alpha'$.
Thus the elements $y_\alpha$ are pairwise distinct. Therefore,
\begin{equation}\notag
|\mathcal{A}^{+}|=|\mathcal{A}|+1,
\end{equation}
and the vanishing polynomial of $\mathcal{A}^{+}$ is
\begin{equation}\notag
P_{\mathcal{A}^{+}}(x)
=
x\prod_{\alpha\in\mathcal{A}}(x-y_\alpha).
\end{equation}

Since $\beta\in \mathbb{F}\setminus \mathcal{A}$,
a direct calculation gives
\begin{equation}\notag
P_{\mathcal{A}^{+}}'(0)
=
\prod_{\theta\in\mathcal{A}}(-y_\theta)
=
\prod_{\theta\in\mathcal{A}}\frac{1}{\beta-\theta}
\neq0.
\end{equation}
For any  $\alpha\in\mathcal{A}$, we  have
\begin{equation}\notag
P_{\mathcal{A}^{+}}'(y_\alpha)
=
y_\alpha
\prod_{\substack{\theta\in\mathcal{A}\\\theta\neq\alpha}}
(y_\alpha-y_\theta).
\end{equation}
Consequently,
\begin{equation}\notag
\begin{aligned}
\frac{P_{\mathcal{A}^{+}}'(y_\alpha)}
     {P_{\mathcal{A}^{+}}'(0)}
&=
-\prod_{\substack{\theta\in\mathcal{A}\\\theta\neq\alpha}}
\frac{y_\alpha-y_\theta}{-y_\theta} \\
&=
-\prod_{\substack{\theta\in\mathcal{A}\\\theta\neq\alpha}}
\frac{\alpha-\theta}{\alpha-\beta} \\
&=
-\frac{P_{\mathcal{A}}'(\alpha)}
        {(\alpha-\beta)^{|\mathcal{A}|-1}}.
\end{aligned}
\end{equation}
Since $\mathcal{A}$ is a strict $\mathbb{F}_q$-good zero-set, we have
\begin{equation}\notag
P_{\mathcal{A}}'(\alpha)\in\mathbb{F}_q^*
\qquad\text{for all }\alpha\in\mathcal{A}.
\end{equation}
Together with \eqref{equation12}, this yields
\begin{equation}\notag
P_{\mathcal{A}^{+}}'(y_\alpha)
\in
P_{\mathcal{A}^{+}}'(0)\mathbb{F}_q^*
\qquad\text{for all }\alpha\in\mathcal{A}.
\end{equation}
It is clear that $P_{\mathcal{A}^{+}}'(0)
\in
P_{\mathcal{A}^{+}}'(0)\mathbb{F}_q^*$. Hence
\begin{equation}\notag
P_{\mathcal{A}^{+}}'(\gamma)
\in
P_{\mathcal{A}^{+}}'(0)\mathbb{F}_q^*
\qquad\text{for all }\gamma\in\mathcal{A}^{+}.
\end{equation}
Therefore, $\mathcal{A}^{+}$ is an
$\mathbb{F}_q$-good zero-set.
\end{proof}

\subsection{Special polynomial construction}
For integers $a$ and $b$, we write $a\mid b$ if $a$ divides $b$,
and $a\nmid b$ otherwise. We can obtain $\mathbb{F}_q$-good zero-sets from some special polynomials,
as the following result shows.
\begin{theorem}\label{theorem5}
Let $q=p^e$ for some prime $p$ and positive integer $e$, and let
$\mathbb{F}$ be a finite extension of $\mathbb{F}_q$.
Let $\beta\in\mathbb{F}^*$, let $r$ be a positive integer such that
$p\nmid r$, and let $H(x)\in\mathbb{F}[x]$ be a nonzero monic polynomial.
Suppose that
\begin{equation}\notag
U(x)=\beta+x^rH(x)^p
\end{equation}
 splits completely over $\mathbb{F}$.
Then
\begin{equation}\notag
\mathcal{A}
=
\{0\}\cup\{\alpha\in\mathbb{F}:U(\alpha)=0\}
\end{equation}
is an $\mathbb{F}_q$-good zero-set satisfying
$
|\mathcal{A}|=\deg(U)+1.
$ 
\end{theorem}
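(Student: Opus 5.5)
The plan is to compute the derivative of the vanishing polynomial of $\mathcal{A}$ directly from the special shape of $U$. Let $\mathcal{Z}=\{\alpha\in\mathbb{F}:U(\alpha)=0\}$ and $N=\deg(U)$.

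\textbf{Structure of $U$ and its roots.} Since $H$ is monic, $U$ is monic of degree $N=r+p\deg H\geq r\geq 1$. Note that $U(0)=\beta\neq 0$ because $r\geq 1$, so $0\notin\mathcal{Z}$. Next compute $U'(x)=rx^{r-1}H(x)^p$, using that $(H^p)'=pH^{p-1}H'=0$ in characteristic $p$.

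\textbf{Simplicity of the roots.} Let $\alpha\in\mathcal{Z}$. From $U(\alpha)=0$ we get $\alpha^rH(\alpha)^p=-\beta\neq0$, hence $\alpha\neq0$ and $H(\alpha)\neq0$. Since $p\nmid r$, $r$ is nonzero in $\mathbb{F}$, so $U'(\alpha)\neq0$. Thus every root of $U$ is simple. Because $U$ splits completely, $U(x)=\prod_{\alpha\in\mathcal{Z}}(x-\alpha)$ and $|\mathcal{Z}|=N$. It follows that $\mathcal{A}=\{0\}\cup\mathcal{Z}$ has $|\mathcal{A}|=N+1$ and $P_{\mathcal{A}}(x)=xU(x)$.

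\textbf{Good-set condition.} We have $P_{\mathcal{A}}'(x)=U(x)+xU'(x)$.
\begin{itemize}
\item At $0$: $P_{\mathcal{A}}'(0)=U(0)=\beta$.
\item At $\alpha\in\mathcal{Z}$:
\[
P_{\mathcal{A}}'(\alpha)=\alpha U'(\alpha)=r\alpha^rH(\alpha)^p=r\bigl(\alpha^rH(\alpha)^p\bigr)=-r\beta .
\]
\end{itemize}
Since $r\in\mathbb{F}_p^*\subseteq\mathbb{F}_q^*$, every value $P_{\mathcal{A}}'(a)$ lies in $\beta\mathbb{F}_q^*$. Taking $\lambda=\beta$ and noting $0\in\mathcal{A}$, this shows $\mathcal{A}$ is an $\mathbb{F}_q$-good zero-set.

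The only point needing care is the simplicity of the roots. It rests on two facts: the $p$-th power kills the derivative contribution of $H$, and $p\nmid r$ keeps the factor $r$ nonzero. Once this is in place, the rest is direct.
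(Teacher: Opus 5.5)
Your proposal is correct and follows essentially the same route as the paper's proof. Both compute $U'(x)=rx^{r-1}H(x)^p$, use $\alpha^rH(\alpha)^p=-\beta$ to show that the roots are nonzero and simple so that $P_{\mathcal A}(x)=xU(x)$, and then evaluate $P_{\mathcal A}'(0)=\beta$ and $P_{\mathcal A}'(\alpha)=-r\beta$ to conclude with $\lambda=\beta$.
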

\begin{proof}
Differentiating $U(x)$ gives
\begin{equation}\notag
U'(x)=rx^{r-1}H(x)^p.
\end{equation}
For any root $\alpha\in\mathbb F$ of $U(x)$, we have
\begin{equation}\notag
\alpha^rH(\alpha)^p=-\beta.
\end{equation}
Since $\beta\neq0$, it follows that $\alpha\neq0$. Moreover,
$p\nmid r$ implies that
\begin{equation}\notag
U'(\alpha)
=r\alpha^{r-1}H(\alpha)^p
=-\frac{r\beta}{\alpha}
\neq0.
\end{equation}
Thus all roots of $U(x)$ are nonzero and simple.
Since $U(x)$ splits completely over $\mathbb F$, we obtain
\begin{equation}\notag
|\mathcal A|=\deg(U)+1.
\end{equation}
Furthermore, $U(x)$ is monic, so the vanishing polynomial of
$\mathcal A$ is
\begin{equation}\notag
P_{\mathcal A}(x)=xU(x).
\end{equation}
It follows that
\begin{equation}\label{ke1}
P_{\mathcal A}'(0)=U(0)=\beta.
\end{equation}
For every $\alpha\in\mathcal A\setminus\{0\}$, we have
$U(\alpha)=0$, and hence
\begin{equation}\label{ke2}
P_{\mathcal A}'(\alpha)
=U(\alpha)+\alpha U'(\alpha)
=-r\beta.
\end{equation}
Since $p\nmid r$, we have
$-r\in\mathbb F_p^*\subseteq\mathbb F_q^*$.
Combining \eqref{ke1} and \eqref{ke2}, we obtain
\begin{equation}\notag
P_{\mathcal A}'(\alpha)\in\beta\mathbb F_q^*
\qquad\text{for all }\alpha\in\mathcal A.
\end{equation}
Together with $0\in\mathcal A$, this shows that $\mathcal A$
is an $\mathbb F_q$-good zero-set.
This completes the proof.
\end{proof}

\section{Families of primitive narrow-sense BCH codes attaining their designed distances} 
\subsection{Designed distances of the form
\texorpdfstring{$\delta=\bigl(1+v(tp^u-1)\bigr)p^b-1$}
{ }}
We first establish the existence of $\mathbb{F}_q$-good zero-sets
of cardinality $t$ for each integer $1\le t\le q$.
These sets serve as starting points for constructing larger good
zero-sets of the required cardinalities in this subsection.
The proof of the following lemma is given in the appendix.
\begin{lemma}\label{lemma5}
Let $q=p^e$ for some prime $p$ and positive integer $e$, and 
let $t$ be an integer with 
$1\le t\le q$. Then there exists a strict $\mathbb{F}_q$-good zero-set $\mathcal{A}\subseteq\F_q$ such that
\begin{equation}\notag
        \qquad |\mathcal{A}|=t
        \quad \hbox{and}\quad  \dim_{\F_p}\Span_{\F_p}(\mathcal{A})\leq \lceil\log_p t\rceil.
\end{equation}
\end{lemma}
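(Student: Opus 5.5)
The key observation is that strictness costs nothing for subsets of $\mathbb{F}_q$ itself, so the real content of Lemma~\ref{lemma5} is the dimension bound. That bound is handled by placing $\mathcal{A}$ inside a small $\mathbb{F}_p$-subspace of $\mathbb{F}_q$.

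First I will verify that any subset $\mathcal{A}\subseteq\mathbb{F}_q$ with $0\in\mathcal{A}$ is a strict $\mathbb{F}_q$-good zero-set. For each $a\in\mathcal{A}$,
\[
P_{\mathcal{A}}'(a)=\prod_{b\in\mathcal{A}\setminus\{a\}}(a-b).
\]
This is a product of nonzero elements of $\mathbb{F}_q$, so it lies in $\mathbb{F}_q^*$. Hence Definition~\ref{definition1} holds with $\lambda=1$. For $|\mathcal{A}|=1$ the product is empty and $P_{\mathcal{A}}'(0)=1$, so the claim still holds.

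Next I will choose $\mathcal{A}$ to control the span. Set $k=\lceil\log_p t\rceil$, so that $t\le p^k$. Since $t\le q=p^e$ and $k$ is the least integer with $p^k\ge t$, we have $0\le k\le e$. Because $\mathbb{F}_q$ is an $e$-dimensional $\mathbb{F}_p$-vector space, it contains an $\mathbb{F}_p$-subspace $V$ of dimension $k$; for instance $V=\Span_{\mathbb{F}_p}(1,\omega,\dots,\omega^{k-1})$ for a generator $\omega$ of $\mathbb{F}_q$ over $\mathbb{F}_p$. Since $|V|=p^k\ge t$ and $0\in V$, I can pick $\mathcal{A}\subseteq V$ with $0\in\mathcal{A}$ and $|\mathcal{A}|=t$.

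Then $\Span_{\mathbb{F}_p}(\mathcal{A})\subseteq V$, so $\dim_{\mathbb{F}_p}\Span_{\mathbb{F}_p}(\mathcal{A})\le k$. Combined with the first step, $\mathcal{A}$ is the required strict good zero-set. There is no real obstacle here; the only points needing care are the inequality $k\le e$, which guarantees that $V$ exists, and the degenerate case $t=1$, where $\mathcal{A}=\{0\}$ and the span has dimension $0=\lceil\log_p 1\rceil$.
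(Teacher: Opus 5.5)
Your proposal is correct and follows the paper's proof essentially verbatim: both take an $\mathbb{F}_p$-subspace of $\mathbb{F}_q$ of dimension $\lceil\log_p t\rceil$, choose $\mathcal{A}$ inside it with $0\in\mathcal{A}$ and $|\mathcal{A}|=t$, and observe that $P_{\mathcal{A}}'(\alpha)=\prod_{\beta\in\mathcal{A}\setminus\{\alpha\}}(\alpha-\beta)\in\mathbb{F}_q^*$. Your explicit checks that $\lceil\log_p t\rceil\le e$ and that the case $t=1$ works are small additions the paper leaves implicit.
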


\begin{theorem}\label{theorem6}
Let $q=p^e$ for some prime $p$ and positive integer $e$, and  let  $s$, $h$, and $m$ be positive
integers such that $s\mid h$ and $h\mid m$. 
Suppose that $\delta\geq 2$ is an integer of the form
\begin{equation}\notag
\delta=\bigl(1+v(tp^u-1)\bigr)p^b-1,
\end{equation}
where $v$ is a positive divisor of $\frac{q^h-1}{q^s-1}$, 
and $t$, $u$ and $b$ are integers satisfying
$1\leq t\leq q$, $0\leq u\leq es-\lceil\log_p t\rceil$ and
$0\leq b\leq e(m-h)$.
Then
\begin{equation}\notag
d(\mathcal{C}_{(q,m,\delta)})=\delta.
\end{equation}
\end{theorem}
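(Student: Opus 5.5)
By Theorem~\ref{theorem1}, it suffices to construct an $\mathbb{F}_q$-good zero-set $\mathcal{A}\subseteq\mathbb{F}_{q^m}$ with $|\mathcal{A}|=\delta+1=\bigl(1+v(tp^u-1)\bigr)p^b$. We build it in three stages, and each stage multiplies or transforms the cardinality in the way dictated by the formula for $\delta$.

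\emph{Stage 1: the seed and the first substitution.} First apply Lemma~\ref{lemma5} to obtain a strict $\mathbb{F}_q$-good zero-set $\mathcal{A}_0\subseteq\mathbb{F}_q\subseteq\mathbb{F}_{q^s}$ with $|\mathcal{A}_0|=t$ and $\dim_{\mathbb{F}_p}\Span_{\mathbb{F}_p}(\mathcal{A}_0)\le\lceil\log_p t\rceil$. The set $\mathcal{A}_0$ is good as a subset of $\mathbb{F}_{q^s}$, since the definition only involves $P_{\mathcal{A}_0}$ and $\lambda=1$. Then apply Corollary~\ref{corollary2} in $\mathbb{F}_{q^s}$ with $r=\lceil\log_p t\rceil$. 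The hypothesis $0\le u\le es-\lceil\log_p t\rceil$ is exactly what the corollary requires, and it gives an $\mathbb{F}_q$-good zero-set $\mathcal{A}_1\subseteq\mathbb{F}_{q^s}$ with $|\mathcal{A}_1|=tp^u$.

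\emph{Stage 2: the power map.} Since $s\mid h$ and $v\mid\frac{q^h-1}{q^s-1}$, Theorem~\ref{theorem3} applies to $\mathcal{A}_1$. It gives an $\mathbb{F}_q$-good zero-set $\mathcal{A}_2\subseteq\mathbb{F}_{q^h}$ with $|\mathcal{A}_2|=1+v(tp^u-1)$.

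\emph{Stage 3: enlarging to $\mathbb{F}_{q^m}$.} Regard $\mathcal{A}_2$ as a subset of $\mathbb{F}_{q^m}$, which is possible because $h\mid m$; it remains $\mathbb{F}_q$-good. Its span satisfies $\dim_{\mathbb{F}_p}\Span_{\mathbb{F}_p}(\mathcal{A}_2)\le\dim_{\mathbb{F}_p}\mathbb{F}_{q^h}=eh$. Applying Corollary~\ref{corollary2} in $\mathbb{F}_{q^m}$ with $s$ replaced by $m$ and $r=eh$, we may take any $b$ with $0\le b\le em-eh=e(m-h)$, which is precisely the stated range. This yields an $\mathbb{F}_q$-good zero-set $\mathcal{A}_3\subseteq\mathbb{F}_{q^m}$ with $|\mathcal{A}_3|=p^b\bigl(1+v(tp^u-1)\bigr)=\delta+1$.

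\emph{Conclusion and the delicate point.} The only remaining check is that Theorem~\ref{theorem1} requires $2\le\delta\le q^m-1$. We have $\delta\ge2$ by hypothesis. The upper bound follows automatically because $\mathcal{A}_3\subseteq\mathbb{F}_{q^m}$ gives $\delta+1\le q^m$. The one step needing care is the span bound in Stage 3. Here I use the crude bound $\dim\le eh$, coming from $\mathcal{A}_2\subseteq\mathbb{F}_{q^h}$, rather than tracking the span through the power map; this crude bound is what matches the hypothesis $b\le e(m-h)$. Theorem~\ref{theorem1} then gives $d(\mathcal{C}_{(q,m,\delta)})=\delta$.
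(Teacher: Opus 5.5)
Your proposal is correct and follows the paper's proof step for step. You take the seed from Lemma~\ref{lemma5}, apply Corollary~\ref{corollary2} in $\mathbb{F}_{q^s}$, then Theorem~\ref{theorem3}, then Corollary~\ref{corollary2} again in $\mathbb{F}_{q^m}$ using the crude span bound $eh$, and conclude with Theorem~\ref{theorem1}. Your explicit check that $\delta\le q^m-1$ holds because $\mathcal{A}_3\subseteq\mathbb{F}_{q^m}$ is a small point the paper leaves implicit.
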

\begin{proof}
    By Lemma~\ref{lemma5}, there exists an $\mathbb{F}_q$-good zero-set $\mathcal{A}_{0}\subseteq \mathbb{F}_q$ such that \begin{equation}\notag
   |\mathcal{A}_{0}|=t\quad \hbox{and}\quad \dim_{\F_p}\Span_{\F_p}(\mathcal{A}_{0})\leq \lceil\log_p t\rceil. 
    \end{equation}
Since $0\leq u\leq es- \lceil\log_p t\rceil$,  Corollary \ref{corollary2} yields
an $\mathbb{F}_q$-good zero-set $\mathcal{A}_{u}\subseteq \mathbb{F}_{q^s}$ such that \begin{equation}\notag
|\mathcal{A}_{u}|=p^u|\mathcal{A}_{0}|=tp^u.
\end{equation}
Since $s\mid h$ and  $v\mid \frac{q^h-1}{q^s-1}$, we can apply Theorem \ref{theorem3} to conclude that 
\begin{equation}\notag
\mathcal{A}^{+}
=\{x\in\mathbb{F}_{q^h}:x^v\in\mathcal{A}_u\}
\end{equation} is an $\mathbb{F}_q$-good zero-set satisfying \begin{equation}\notag
|\mathcal{A}^{+}|=1+v(|\mathcal{A}_{u}|-1)=1+v(tp^u-1).
\end{equation}
Note that $\mathcal{A}^{+}\subseteq \mathbb{F}_{q^h}$ and $h\mid m$.   Thus, we obtain \begin{equation}\notag
\mathcal{A}^{+}\subseteq\mathbb{F}_{q^m}\quad \hbox{and}\quad \dim_{\mathbb{F}_p} \Span_{\mathbb{F}_p}(\mathcal{A}^{+})\leq \dim_{\mathbb{F}_p}(\mathbb{F}_{q^h}) = eh.
\end{equation}
Since $0\leq b\leq em-eh$, 
we can apply Corollary~\ref{corollary2} again to obtain an $\mathbb{F}_q$-good zero-set $\mathcal{A}^{++}\subseteq \mathbb{F}_{q^m}$ such that
\begin{equation}\notag
|\mathcal{A}^{++}|=|\mathcal{A}^{+}|p^b=\bigl(1+v(tp^u-1)\bigr)p^b=\delta+1.
\end{equation}
Then applying Theorem \ref{theorem1}, we have $d(\mathcal{C}_{(q,m,\delta)})=\delta$.  This completes the proof.
\end{proof}
It is worth noting that Theorem~\ref{theorem6} covers a broader range of designed distances
while recovering several known results as special cases. The following
corollary records some of these cases. In particular, part~(ii) is the
$q$-power specialization of part~(i). Part~(ii)
recovers the  minimum-distance  results in
\cite[Theorem~5, p.260]{MacWilliamsSloane} and 
\cite[Theorem~10]{D2015}. 
Parts~(iii) and~(iv) recover the primitive case of 
\cite[Lemma~19]{LiuDingLi2017} and the case $ 2a \mid  m$ of 
\cite[Theorem~13]{DingDuZhou2015}, respectively.
\begin{corollary}\label{corollary:known-cases}
Let $q=p^e$ for some prime $p$ and positive integer $e$,
and let $m$ be a positive integer.
Suppose that $\delta\geq2$ has one of the following forms:
\begin{enumerate}[label=(\roman*)]
\item 
$\delta=tp^u-1$,
where $t$ and $u$ are integers satisfying 
$1\leq t\leq q$ and
$0\leq u\leq em-\lceil\log_p t\rceil$.

\item 
$\delta=tq^a-1$,
where $t$ and $a$ are integers  satisfying 
$1\leq t\leq q$ and $0\leq a\leq m-1$.
\item
$\delta=kv$,
where $k$ and $v$ are positive integers satisfying
$v\mid(q^m-1)/(q-1)$ and $1\leq k\leq q-1$.

\item 
$\delta=q^a+1$,
where $a$ is an integer satisfying  $2a\mid m$.
\end{enumerate}
Then
\begin{equation}\notag
d\bigl(\mathcal{C}_{(q,m,\delta)}\bigr)=\delta.
\end{equation}
\end{corollary}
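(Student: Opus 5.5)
Each part is a specialization of Theorem~\ref{theorem6}; the work is choosing $(s,h,v,t,u,b)$ so that all side conditions hold.

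For part~(i), take $s=h=m$ and $v=1$, which divides $\frac{q^m-1}{q^m-1}=1$. Take $b=0$, allowed since $0\le b\le e(m-h)=0$. Then
\[
\delta=\bigl(1+(tp^u-1)\bigr)-1=tp^u-1,
\]
and the constraint $0\le u\le es-\lceil\log_p t\rceil=em-\lceil\log_p t\rceil$ is exactly the hypothesis. So Theorem~\ref{theorem6} applies directly.

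For part~(ii), write $tq^a=tp^{ea}$ and apply part~(i) with $u=ea$. We need $ea\le em-\lceil\log_p t\rceil$. Since $t\le q=p^e$, we have $\lceil\log_p t\rceil\le e$, so it suffices that $ea\le e(m-1)$, which is $a\le m-1$.

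For part~(iii), take $s=1$, $h=m$, $u=0$ and $b=0$ (allowed since $e(m-h)=0$). Take $t=k+1$, which lies in $[2,q]$. The constraint $0\le u\le e-\lceil\log_p t\rceil$ holds because $t\le q$. Then
\[
\delta=\bigl(1+v(k+1-1)\bigr)-1=kv,
\]
and $v$ divides $\frac{q^m-1}{q-1}$ by hypothesis, as required.

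For part~(iv), the work is finding the parameters; the verifications are routine.
\begin{itemize}
\item Set $s=a$ and $h=2a$. Then $s\mid h$, and $h\mid m$ by the hypothesis $2a\mid m$.
\item Take $t=1$, $u=ea$, $b=0$ and $v=q^a+1=\frac{q^{2a}-1}{q^a-1}$, which divides $\frac{q^h-1}{q^s-1}$.
\item Check the exponent bound: $u\le es-\lceil\log_p 1\rceil=ea$.
\item Check the formula: $|\mathcal A_u|=tp^u=q^a$, so
\[
\delta=\bigl(1+(q^a+1)(q^a-1)\bigr)-1=q^{2a}-1.
\]
\end{itemize}
This gives $\delta=q^{2a}-1$, which is not the required $q^a+1$. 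So the roles must change.

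The next attempt is $s=2a$ and $h=m$, which still gives the wrong value. Better is to take $s=a$, $h=2a$, $v=q^a+1$, and a base set of size $2$ with $u=0$, $b=0$. Then
\[
\delta=\bigl(1+v(2-1)\bigr)-1=v=q^a+1,
\]
as required. Here $t=2\le q$ and the bound on $u$ is trivial, so this choice works.

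In summary, the only obstacle is selecting parameters in part~(iv). The right choice is $t=2$, $u=b=0$, $s=a$, $h=2a$ and $v=q^a+1$. Part~(iii) follows the same pattern with $s=1$ and $h=m$.
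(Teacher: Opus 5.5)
Your proof is correct. Parts (i)--(iii) use the same specializations of Theorem~\ref{theorem6} as the paper.

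Part (iv) takes a different route. The paper derives it from part (iii), with $s=1$, $h=m$, $k=1$, $v=q^a+1$. It therefore has to prove $q^a+1\mid\frac{q^m-1}{q-1}$ via the factorization
\[
\frac{q^m-1}{q-1}=\frac{q^m-1}{q^{2a}-1}\cdot\frac{q^a-1}{q-1}\cdot\bigl(q^a+1\bigr).
\]
Your final choice is $s=a$, $h=2a$, $t=2$, $u=b=0$, $v=q^a+1$. It makes the divisor condition automatic, since $v=\frac{q^{2a}-1}{q^a-1}$, and turns the hypothesis $2a\mid m$ into the condition $h\mid m$ of Theorem~\ref{theorem6}. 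The remaining bounds are $0\le ea-\lceil\log_p 2\rceil=ea-1$ and $0\le e(m-2a)$. Both hold because $a\ge 1$ and $m\ge 2a$; you should state the second check explicitly.

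The two routes produce the same set, namely $\{0\}\cup\{x: x^{q^a+1}=c\}$ for some $c\in\mathbb{F}_q^*$. This is because $x\bigl(x^{q^a+1}-c\bigr)$ has degree $q^a+2$ and both constructions exhibit that many roots. The only difference is whether the power map is applied over $\mathbb{F}_{q^m}/\mathbb{F}_q$ or over $\mathbb{F}_{q^{2a}}/\mathbb{F}_{q^a}$. Your version avoids the arithmetic divisibility step, while the paper's shows (iv) is literally a subcase of (iii).

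In a final write-up, delete the two discarded attempts in part (iv) ($t=1,u=ea$, and $s=2a,h=m$). They are not wrong, but they obscure the argument.
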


\begin{proof}
Part~(i) follows from Theorem~\ref{theorem6} by taking
$s=h=m$, $v=1$, and $b=0$. 
 Since
$1\leq t\leq q$, we have
$\lceil\log_p t\rceil\leq  \lceil\log_p q\rceil =e$. Consequently, for
$0\leq a\leq m-1$,
\begin{equation}\notag
0\leq ea
\leq e(m-1)
\leq em-\left\lceil\log_p t\right\rceil.
\end{equation}
Thus, we can take $u=ae$ in part~(i) to obtain part~(ii). 

Applying  Theorem~\ref{theorem6} with
$s=1$, $h=m$, $t=k+1$, and $u=b=0$, we obtain part (iii). 
Finally, suppose that the conditions in part (iv) hold.
Notice that
\begin{equation}\notag
\frac{q^m-1}{q-1}
=
\frac{q^m-1}{q^{2a}-1}
\frac{q^a-1}{q-1}
(q^a+1)
\end{equation}
and $\frac{q^a-1}{q-1}$ is an integer. 
Since $2a\mid m$, 
$\frac{q^m-1}{q^{2a}-1}$ is also an integer. 
Hence $q^a+1$ divides $ \frac{q^m-1}{q-1}$.
Taking $k=1$ and $v=q^a+1$ in part~(iii) gives
$\delta=q^a+1$. This completes the proof.
\end{proof}

A $q$-ary linear code is called \emph{optimal} if no $q$-ary linear
code with the same length and dimension has a larger minimum distance.
Table~\ref{tab:examples-theorem-6} presents selected examples of
the BCH codes in Theorem~\ref{theorem6}.
The dimensions were computed and the minimum distances verified
using Magma.
Comparison with the Database shows that each listed code is either optimal or attains the best-known minimum distance.

\begingroup\scriptsize\setlength{\tabcolsep}{4pt}\renewcommand{\arraystretch}{0.85}
\begin{longtable}[c]{@{}rrrrrrcc@{}}
\caption{Examples of BCH codes in Theorem 6}\label{tab:examples-theorem-6}\\
\toprule
\multicolumn{1}{@{}c}{$q$} & \multicolumn{1}{c}{$m$} & \multicolumn{1}{c}{$\delta$} & \multicolumn{1}{c}{$n$} & \multicolumn{1}{c}{$\dim$} & \multicolumn{1}{c}{$d$} & \multicolumn{1}{c}{Parameter choices} & \multicolumn{1}{c@{}}{optimality}\\
\midrule
\endfirsthead
\multicolumn{8}{c}{\tablename\ \thetable\ -- continued from previous page}\\
\toprule
\multicolumn{1}{@{}c}{$q$} & \multicolumn{1}{c}{$m$} & \multicolumn{1}{c}{$\delta$} & \multicolumn{1}{c}{$n$} & \multicolumn{1}{c}{$\dim$} & \multicolumn{1}{c}{$d$} & \multicolumn{1}{c}{Parameter choices} & \multicolumn{1}{c@{}}{optimality}\\
\midrule
\endhead
\midrule
\multicolumn{8}{r}{Continued on next page}\\
\endfoot
\bottomrule
\endlastfoot
2 & 2 & 3 & 3 & 1 & 3 & $ (s,h,t,u,b,v)=(1,1,1,1,1,1) $ & optimal\\
2 & 3 & 3 & 7 & 4 & 3 & $ (s,h,t,u,b,v)=(1,1,1,0,2,1) $ & optimal\\
2 & 3 & 7 & 7 & 1 & 7 & $ (s,h,t,u,b,v)=(1,1,1,1,2,1) $ & optimal\\
2 & 4 & 3 & 15 & 11 & 3 & $ (s,h,t,u,b,v)=(1,1,1,0,2,1) $ & optimal\\
2 & 4 & 5 & 15 & 7 & 5 & $ (s,h,t,u,b,v)=(1,4,1,1,0,5) $ & optimal\\
2 & 4 & 7 & 15 & 5 & 7 & $ (s,h,t,u,b,v)=(1,1,1,0,3,1) $ & optimal\\
2 & 4 & 15 & 15 & 1 & 15 & $ (s,h,t,u,b,v)=(1,1,1,1,3,1) $ & optimal\\
2 & 5 & 3 & 31 & 26 & 3 & $ (s,h,t,u,b,v)=(1,1,1,0,2,1) $ & optimal\\
2 & 5 & 15 & 31 & 6 & 15 & $ (s,h,t,u,b,v)=(1,1,1,0,4,1) $ & optimal\\
2 & 5 & 31 & 31 & 1 & 31 & $ (s,h,t,u,b,v)=(1,1,1,1,4,1) $ & optimal\\
2 & 6 & 3 & 63 & 57 & 3 & $ (s,h,t,u,b,v)=(1,1,1,0,2,1) $ & optimal\\
2 & 6 & 9 & 63 & 39 & 9 & $ (s,h,t,u,b,v)=(1,6,1,1,0,9) $ & best known\\
2 & 6 & 21 & 63 & 18 & 21 & $ (s,h,t,u,b,v)=(1,6,1,1,0,21) $ & best known\\
2 & 6 & 27 & 63 & 10 & 27 & $ (s,h,t,u,b,v)=(3,6,1,2,0,9) $ & best known\\
2 & 6 & 31 & 63 & 7 & 31 & $ (s,h,t,u,b,v)=(1,1,1,0,5,1) $ & optimal\\
2 & 6 & 63 & 63 & 1 & 63 & $ (s,h,t,u,b,v)=(1,1,1,1,5,1) $ & optimal\\
2 & 7 & 3 & 127 & 120 & 3 & $ (s,h,t,u,b,v)=(1,1,1,0,2,1) $ & optimal\\
2 & 7 & 7 & 127 & 106 & 7 & $ (s,h,t,u,b,v)=(1,1,1,0,3,1) $ & best known\\
2 & 7 & 15 & 127 & 78 & 15 & $ (s,h,t,u,b,v)=(1,1,1,0,4,1) $ & best known\\
2 & 7 & 63 & 127 & 8 & 63 & $ (s,h,t,u,b,v)=(1,1,1,0,6,1) $ & optimal\\
2 & 7 & 127 & 127 & 1 & 127 & $ (s,h,t,u,b,v)=(1,1,1,1,6,1) $ & optimal\\
2 & 8 & 3 & 255 & 247 & 3 & $ (s,h,t,u,b,v)=(1,1,1,0,2,1) $ & optimal\\
2 & 8 & 5 & 255 & 239 & 5 & $ (s,h,t,u,b,v)=(1,4,1,1,0,5) $ & optimal\\
2 & 8 & 7 & 255 & 231 & 7 & $ (s,h,t,u,b,v)=(1,1,1,0,3,1) $ & best known\\
2 & 8 & 11 & 255 & 215 & 11 & $ (s,h,t,u,b,v)=(1,4,1,1,1,5) $ & best known\\
2 & 8 & 15 & 255 & 199 & 15 & $ (s,h,t,u,b,v)=(1,1,1,0,4,1) $ & best known\\
2 & 8 & 17 & 255 & 191 & 17 & $ (s,h,t,u,b,v)=(1,8,1,1,0,17) $ & best known\\
2 & 8 & 23 & 255 & 171 & 23 & $ (s,h,t,u,b,v)=(1,4,1,1,2,5) $ & best known\\
2 & 8 & 31 & 255 & 139 & 31 & $ (s,h,t,u,b,v)=(1,1,1,0,5,1) $ & best known\\
2 & 8 & 47 & 255 & 99 & 47 & $ (s,h,t,u,b,v)=(1,4,1,1,3,5) $ & best known\\
2 & 8 & 51 & 255 & 91 & 51 & $ (s,h,t,u,b,v)=(1,8,1,1,0,51) $ & best known\\
2 & 8 & 85 & 255 & 47 & 85 & $ (s,h,t,u,b,v)=(1,8,1,1,0,85) $ & best known\\
2 & 8 & 119 & 255 & 13 & 119 & $ (s,h,t,u,b,v)=(4,8,1,3,0,17) $ & best known\\
2 & 8 & 127 & 255 & 9 & 127 & $ (s,h,t,u,b,v)=(1,1,1,0,7,1) $ & optimal\\
2 & 8 & 255 & 255 & 1 & 255 & $ (s,h,t,u,b,v)=(1,1,1,1,7,1) $ & optimal\\
3 & 1 & 2 & 2 & 1 & 2 & $ (s,h,t,u,b,v)=(1,1,1,1,0,1) $ & optimal\\
3 & 2 & 2 & 8 & 6 & 2 & $ (s,h,t,u,b,v)=(1,1,1,0,1,1) $ & optimal\\
3 & 2 & 4 & 8 & 4 & 4 & $ (s,h,t,u,b,v)=(1,2,1,1,0,2) $ & optimal\\
3 & 2 & 5 & 8 & 3 & 5 & $ (s,h,t,u,b,v)=(1,1,2,0,1,1) $ & optimal\\
3 & 2 & 8 & 8 & 1 & 8 & $ (s,h,t,u,b,v)=(1,1,1,1,1,1) $ & optimal\\
3 & 3 & 2 & 26 & 23 & 2 & $ (s,h,t,u,b,v)=(1,1,1,0,1,1) $ & optimal\\
3 & 3 & 13 & 26 & 8 & 13 & $ (s,h,t,u,b,v)=(1,3,2,0,0,13) $ & optimal\\
3 & 3 & 17 & 26 & 4 & 17 & $ (s,h,t,u,b,v)=(1,1,2,0,2,1) $ & optimal\\
3 & 3 & 26 & 26 & 1 & 26 & $ (s,h,t,u,b,v)=(1,1,1,1,2,1) $ & optimal\\
3 & 4 & 2 & 80 & 76 & 2 & $ (s,h,t,u,b,v)=(1,1,1,0,1,1) $ & optimal\\
3 & 4 & 4 & 80 & 72 & 4 & $ (s,h,t,u,b,v)=(1,2,1,1,0,2) $ & optimal\\
3 & 4 & 8 & 80 & 60 & 8 & $ (s,h,t,u,b,v)=(1,1,1,0,2,1) $ & best known\\
3 & 4 & 10 & 80 & 56 & 10 & $ (s,h,t,u,b,v)=(1,4,1,1,0,5) $ & best known\\
3 & 4 & 14 & 80 & 46 & 14 & $ (s,h,t,u,b,v)=(1,2,1,1,1,2) $ & best known\\
3 & 4 & 40 & 80 & 16 & 40 & $ (s,h,t,u,b,v)=(1,4,1,1,0,20) $ & best known\\
3 & 4 & 44 & 80 & 11 & 44 & $ (s,h,t,u,b,v)=(1,2,1,1,2,2) $ & best known\\
3 & 4 & 50 & 80 & 7 & 50 & $ (s,h,t,u,b,v)=(2,4,2,1,0,10) $ & optimal\\
3 & 4 & 53 & 80 & 5 & 53 & $ (s,h,t,u,b,v)=(1,1,2,0,3,1) $ & optimal\\
3 & 4 & 80 & 80 & 1 & 80 & $ (s,h,t,u,b,v)=(1,1,1,1,3,1) $ & optimal\\
3 & 5 & 2 & 242 & 237 & 2 & $ (s,h,t,u,b,v)=(1,1,1,0,1,1) $ & optimal\\
3 & 5 & 5 & 242 & 227 & 5 & $ (s,h,t,u,b,v)=(1,1,2,0,1,1) $ & best known\\
3 & 5 & 8 & 242 & 217 & 8 & $ (s,h,t,u,b,v)=(1,1,1,0,2,1) $ & best known\\
3 & 5 & 11 & 242 & 207 & 11 & $ (s,h,t,u,b,v)=(1,5,2,0,0,11) $ & best known\\
3 & 5 & 26 & 242 & 157 & 26 & $ (s,h,t,u,b,v)=(1,1,1,0,3,1) $ & best known\\
3 & 5 & 121 & 242 & 32 & 121 & $ (s,h,t,u,b,v)=(1,5,2,0,0,121) $ & best known\\
3 & 5 & 161 & 242 & 6 & 161 & $ (s,h,t,u,b,v)=(1,1,2,0,4,1) $ & optimal\\
3 & 5 & 242 & 242 & 1 & 242 & $ (s,h,t,u,b,v)=(1,1,1,1,4,1) $ & optimal\\
4 & 1 & 2 & 3 & 2 & 2 & $ (s,h,t,u,b,v)=(1,1,3,0,0,1) $ & optimal\\
4 & 1 & 3 & 3 & 1 & 3 & $ (s,h,t,u,b,v)=(1,1,1,2,0,1) $ & optimal\\
4 & 2 & 2 & 15 & 13 & 2 & $ (s,h,t,u,b,v)=(1,1,3,0,0,1) $ & optimal\\
4 & 2 & 5 & 15 & 9 & 5 & $ (s,h,t,u,b,v)=(1,1,3,0,1,1) $ & optimal\\
4 & 2 & 10 & 15 & 4 & 10 & $ (s,h,t,u,b,v)=(1,2,3,0,0,5) $ & optimal\\
4 & 2 & 11 & 15 & 3 & 11 & $ (s,h,t,u,b,v)=(1,1,3,0,2,1) $ & optimal\\
4 & 2 & 15 & 15 & 1 & 15 & $ (s,h,t,u,b,v)=(1,1,1,2,2,1) $ & optimal\\
4 & 3 & 2 & 63 & 60 & 2 & $ (s,h,t,u,b,v)=(1,1,3,0,0,1) $ & optimal\\
4 & 3 & 5 & 63 & 54 & 5 & $ (s,h,t,u,b,v)=(1,1,3,0,1,1) $ & best known\\
4 & 3 & 6 & 63 & 51 & 6 & $ (s,h,t,u,b,v)=(1,3,3,0,0,3) $ & best known\\
4 & 3 & 9 & 63 & 45 & 9 & $ (s,h,t,u,b,v)=(1,3,1,2,0,3) $ & best known\\
4 & 3 & 21 & 63 & 27 & 21 & $ (s,h,t,u,b,v)=(1,3,1,1,0,21) $ & best known\\
4 & 3 & 42 & 63 & 8 & 42 & $ (s,h,t,u,b,v)=(1,3,3,0,0,21) $ & optimal\\
4 & 3 & 47 & 63 & 4 & 47 & $ (s,h,t,u,b,v)=(1,1,3,0,4,1) $ & optimal\\
4 & 3 & 63 & 63 & 1 & 63 & $ (s,h,t,u,b,v)=(1,1,1,2,4,1) $ & optimal\\
4 & 4 & 2 & 255 & 251 & 2 & $ (s,h,t,u,b,v)=(1,1,3,0,0,1) $ & optimal\\
4 & 4 & 5 & 255 & 243 & 5 & $ (s,h,t,u,b,v)=(1,1,3,0,1,1) $ & best known\\
4 & 4 & 15 & 255 & 211 & 15 & $ (s,h,t,u,b,v)=(1,1,1,0,4,1) $ & best known\\
4 & 4 & 170 & 255 & 16 & 170 & $ (s,h,t,u,b,v)=(1,4,3,0,0,85) $ & best known\\
4 & 4 & 175 & 255 & 11 & 175 & $ (s,h,t,u,b,v)=(1,2,3,0,4,5) $ & best known\\
4 & 4 & 187 & 255 & 7 & 187 & $ (s,h,t,u,b,v)=(2,4,3,2,0,17) $ & best known\\
4 & 4 & 191 & 255 & 5 & 191 & $ (s,h,t,u,b,v)=(1,1,3,0,6,1) $ & optimal\\
4 & 4 & 255 & 255 & 1 & 255 & $ (s,h,t,u,b,v)=(1,1,1,2,6,1) $ & optimal\\
\end{longtable}\endgroup

\subsection{Designed distances of the form
\texorpdfstring{$\delta=
\bigl(1+v\bigl((q^t+2)p^u-1\bigr)\bigr)p^b-1$}
{ }}
 The following result of Chen et al.~\cite{Chen2026} determines
the minimum distance of $\mathcal{C}_{(q,m,q^t+1)}$ and hence
yields $\mathbb{F}_q$-good zero-sets of cardinality $q^t+2$
under suitable conditions.
\begin{lemma}\label{lemma6}
\textnormal{\cite[Theorem 8]{Chen2026}}
Let $q$ be a prime power, and 
let $m$ and $t$ be two positive integers such that
$t\mid m$ and $t<m$. Then
\begin{equation}\notag
d\left(\mathcal{C}_{(q,m,q^t+1)}\right)=q^t+1.
\end{equation}
\end{lemma}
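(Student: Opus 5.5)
The plan is to use the criterion of Theorem~\ref{theorem1} and exhibit an $\mathbb{F}_q$-good zero-set of cardinality $q^t+2$ in $\mathbb{F}_{q^m}$. I would build it with the special polynomial construction of Theorem~\ref{theorem5}, taking $r=1$. Write $Q=q^t$. The target polynomial is
\[
U(z)=z^{Q+1}+Az+B=B+z\bigl(z^{Q/p}+A^{1/p}\bigr)^{p},
\]
which has the required form $\beta+z^rH(z)^p$ with $H(z)=z^{Q/p}+A^{1/p}$ monic. Here $A^{1/p}$ exists because Frobenius is bijective on $\mathbb{F}_{q^m}$. So the task reduces to finding $A,B\in\mathbb{F}_{q^m}$ with $B\neq0$ such that $U$ splits completely over $\mathbb{F}_{q^m}$. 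Theorem~\ref{theorem5} then gives a good zero-set of size $\deg U+1=Q+2$. Since $t<m$, we have $Q+1\le q^m-1$, so Theorem~\ref{theorem1} applies with $\delta=q^t+1$.

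To produce such a polynomial, I use $t\mid m$ and $t<m$. These imply $\mathbb{F}_Q\subseteq\mathbb{F}_{q^m}$ with $[\mathbb{F}_{q^m}:\mathbb{F}_Q]\ge2$. Hence I can choose a $2$-dimensional $\mathbb{F}_Q$-subspace $W=\mathbb{F}_Q+\mathbb{F}_Qw\subseteq\mathbb{F}_{q^m}$. Its subspace polynomial $L(x)=\prod_{y\in W}(x-y)$ is $\mathbb{F}_Q$-linearized, namely $L(x)=x^{Q^2}+Ax^{Q}+Bx$ with $A,B\in\mathbb{F}_{q^m}$. Because $L$ has $Q^2$ distinct roots, $L'(0)=B\neq0$. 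For $x\neq0$ we have
\[
L(x)/x=x^{Q^2-1}+Ax^{Q-1}+B=U\bigl(x^{Q-1}\bigr),
\]
since $Q^2-1=(Q-1)(Q+1)$. Hence $y^{Q-1}$ is a root of $U$ for every $y\in W\setminus\{0\}$.

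Next I check that these roots are distinct and exhaust $U$. For nonzero $y,y'$ we have $y^{Q-1}=y'^{Q-1}$ exactly when $y/y'\in\mathbb{F}_Q^*$. The nonzero vectors of $W$ form $(Q^2-1)/(Q-1)=Q+1$ classes under $\mathbb{F}_Q^*$-scaling, so they yield $Q+1$ distinct roots of $U$ lying in $\mathbb{F}_{q^m}$. As $\deg U=Q+1$, the polynomial $U$ splits completely over $\mathbb{F}_{q^m}$. This is exactly the hypothesis of Theorem~\ref{theorem5}.

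The main obstacle is this splitting step: one must find a degree-$(q^t+1)$ polynomial of the special shape with all roots in $\mathbb{F}_{q^m}$ even when $m/t$ is odd. When $m/t$ is even, $z^{Q+1}=1$ would already suffice, which is the case covered by Corollary~\ref{corollary:known-cases}(iv). The subspace-polynomial substitution $z=x^{Q-1}$ above handles all $m/t\ge2$ uniformly. The remaining verifications are routine: $B\neq0$, $p\nmid r=1$, and the chain of Theorems~\ref{theorem5} and~\ref{theorem1}.
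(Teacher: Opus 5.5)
Your proof is correct. It differs from the paper, which gives no argument for this lemma: it imports the statement from \cite{Chen2026} and uses Theorem~\ref{theorem1} only to turn it into a good zero-set of size $q^t+2$ for Theorem~\ref{theorem7}.

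You instead build that zero-set inside the paper's own framework. The subspace polynomial $x^{Q^2}+Ax^{Q}+Bx$ of a $2$-dimensional $\mathbb{F}_Q$-subspace $W\subseteq\mathbb{F}_{q^m}$ becomes, under $z=x^{Q-1}$, the polynomial $z^{Q+1}+Az+B=B+z\bigl(z^{Q/p}+A^{1/p}\bigr)^p$. Its $Q+1$ roots are the values $y^{Q-1}$ on the $Q+1$ points of the projective line of $W$, so Theorem~\ref{theorem5} applies with $r=1$.

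Every check goes through: $B=L'(0)\neq0$, the fibres of $y\mapsto y^{Q-1}$ on $W\setminus\{0\}$ are exactly the $\mathbb{F}_Q^*$-classes, $Q/p\in\mathbb{Z}$, and $q^t+1\le q^m-1$. The one fact you quote, that the subspace polynomial is $Q$-linearized, can be made explicit as $L(x)=(x^Q-x)^Q-(w^Q-w)^{Q-1}(x^Q-x)$. This also gives $B=(w^Q-w)^{Q-1}\neq0$ directly.

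Your route buys two things. First, self-containment: Theorem~\ref{theorem7} and its corollary no longer rest on an external result. Second, unification: Lemma~\ref{lemma6} becomes one more instance of the special-polynomial construction, alongside Theorem~\ref{theorem9}, with an explicit zero-set $\{0\}\cup\{y^{Q-1}:y\in W\setminus\{0\}\}$. When $2t\mid m$, the choice $W=\mathbb{F}_{Q^2}$ gives $L(x)=x^{Q^2}-x$ and $U(z)=z^{Q+1}-1$, recovering Corollary~\ref{corollary:known-cases}(iv).

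The citation is shorter on the page. However, it hides that the needed splitting polynomial is a classical $x^{Q+1}+ax+b$ (Bluher-type) polynomial obtainable in a few lines.
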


\begin{theorem}
\label{theorem7}
Let $q=p^e$ for some prime $p$ and positive integer $e$, and 
let  $l,s,h,m$ be positive integers such that
$
l\mid s,  s\mid h$  and $h\mid m
$.  Suppose that  $\delta\geq 2$ is  an integer of the form 
\begin{equation}\notag
\delta
=
\left[
1+v\left((q^t+2)p^u-1\right)
\right]p^b-1,
\end{equation}
where $v$ is a positive divisor of $\frac{q^h-1}{q^s-1}$, and   $t,u, b$ are nonnegative  integers satisfying $t\mid l$, $1\leq t<l$, 
$
0\leq u\leq e(s-l) $ and $
0\leq b\leq e(m-h).
$
 Then 
\begin{equation}\notag
d\left(\mathcal{C}_{(q,m,\delta)}\right)=\delta.
\end{equation}
\end{theorem}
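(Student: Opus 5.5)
The argument follows the proof of Theorem~\ref{theorem6}, with the seed set of Lemma~\ref{lemma5} replaced by one coming from Lemma~\ref{lemma6}. Since $t\mid l$ and $1\le t<l$, Lemma~\ref{lemma6} (applied with $m$ replaced by $l$) gives $d(\mathcal{C}_{(q,l,q^t+1)})=q^t+1$. The inequality $q^t+1\le q^l-1$ holds because $t<l$, so Theorem~\ref{theorem1} applies. It produces an $\mathbb{F}_q$-good zero-set $\mathcal{A}_0\subseteq\mathbb{F}_{q^l}$ with $|\mathcal{A}_0|=q^t+2$.

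Next we enlarge $\mathcal{A}_0$ by $p$-linearized substitution. Regard $\mathcal{A}_0$ as a subset of $\mathbb{F}_{q^s}$, which is legitimate because $l\mid s$. Since $\mathcal{A}_0\subseteq\mathbb{F}_{q^l}$, its span satisfies $\dim_{\mathbb{F}_p}\Span_{\mathbb{F}_p}(\mathcal{A}_0)\le el$. Corollary~\ref{corollary2} therefore applies with $r=el$ and ambient field $\mathbb{F}_{q^s}$, because $0\le u\le e(s-l)=es-r$. This gives a good zero-set $\mathcal{A}_u\subseteq\mathbb{F}_{q^s}$ with $|\mathcal{A}_u|=(q^t+2)p^u$.

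Then we apply the power construction. Because $s\mid h$ and $v\mid\frac{q^h-1}{q^s-1}$, Theorem~\ref{theorem3} yields an $\mathbb{F}_q$-good zero-set $\mathcal{A}^+\subseteq\mathbb{F}_{q^h}$ of size
\[
1+v\bigl((q^t+2)p^u-1\bigr).
\]
Its $\mathbb{F}_p$-span has dimension at most $eh$.

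Finally we substitute again. Since $h\mid m$, we have $\mathcal{A}^+\subseteq\mathbb{F}_{q^m}$. Corollary~\ref{corollary2} applies with $s$ replaced by $m$ and $r=eh$, because $0\le b\le e(m-h)$. It produces a good zero-set $\mathcal{A}^{++}\subseteq\mathbb{F}_{q^m}$ with $|\mathcal{A}^{++}|=\delta+1$.

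To conclude with Theorem~\ref{theorem1}, we must check $2\le\delta\le q^m-1$. The lower bound is a hypothesis. The upper bound is automatic, since $\mathcal{A}^{++}\subseteq\mathbb{F}_{q^m}$ forces $\delta+1\le q^m$. Hence $d(\mathcal{C}_{(q,m,\delta)})=\delta$.

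The only step needing care is the first: we must verify that Lemma~\ref{lemma6} is used with extension degree $l$, and that the hypothesis $t<l$ places $\delta_0=q^t+1$ in the range required by Theorem~\ref{theorem1}. Everything else is bookkeeping on field inclusions and span dimensions.
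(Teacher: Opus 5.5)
Your proposal is correct and follows essentially the same approach as the paper's proof: Lemma~\ref{lemma6} at extension degree $l$ and Theorem~\ref{theorem1} give a good zero-set of size $q^t+2$ in $\mathbb{F}_{q^l}\subseteq\mathbb{F}_{q^s}$, which you then push through Corollary~\ref{corollary2} with $r=el$, Theorem~\ref{theorem3}, and Corollary~\ref{corollary2} again with $r=eh$, before a final application of Theorem~\ref{theorem1}. You also explicitly check $q^t+1\le q^l-1$ and $\delta\le q^m-1$ so that Theorem~\ref{theorem1} applies; the paper leaves these checks implicit.
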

\begin{proof}
Since $t\mid l$ and $t<l$, Lemma~\ref{lemma6} gives
$d\left(\mathcal{C}_{(q,l,q^t+1)}\right)=q^t+1$. 
Then by Theorem~\ref{theorem1}, there exists an
$\mathbb F_q$-good zero-set
$
\mathcal{A}_{0}\subseteq\mathbb F_{q^{l}}
$
such that
\begin{equation}\notag
|\mathcal{A}_{0}|=q^t+2.
\end{equation}
Moreover, 
\begin{equation}\notag
\dim_{\mathbb F_p}
\Span_{\mathbb F_p}(\mathcal{A}_{0})
\leq
\dim_{\mathbb F_p}\mathbb F_{q^{l}}
=
el.
\end{equation}
Since 
$
l\mid s 
$, we have $\mathbb{F}_{q^l}\subseteq \mathbb{F}_{q^s}$ and hence $\mathcal{A}_0\subseteq \mathbb{F}_{q^s}$.

 Since  
$
0\leq u\leq es-el,
$ we can apply  
Corollary \ref{corollary2} with $r=el$ to obtain   an $\mathbb F_q$-good zero-set
$
\mathcal{A}_{u}\subseteq\mathbb F_{q^s}
$
satisfying 
\begin{equation}\notag
|\mathcal{A}_{u}|
=
p^u|\mathcal{A}_{0}|
=
(q^t+2)p^u.
\end{equation}
Furthermore, 
since
$s\mid h$ and
$
v\mid\frac{q^h-1}{q^s-1},
$  applying  
Theorem \ref{theorem3} 
yields   an $\mathbb F_q$-good zero-set
$
\mathcal{A}^{+}\subseteq\mathbb F_{q^h}
$
satisfying 
\begin{equation}\notag
\begin{aligned}
|\mathcal{A}^{+}|
=
1+v\left(|\mathcal{A}_{u}|-1\right)=
1+v\left((q^t+2)p^u-1\right)
\end{aligned}
\end{equation}
and 
\begin{equation}\notag
\dim_{\mathbb{F}_p} \Span_{\mathbb{F}_p}(\mathcal{A}^{+})\leq \dim_{\mathbb{F}_p}(\mathbb{F}_{q^h}) = eh .
\end{equation}
Since 
$h\mid m$, we have $\mathbb{F}_{q^h}\subseteq \mathbb{F}_{q^m}$ and hence $\mathcal{A}^{+}\subseteq \mathbb{F}_{q^m}$. 
Since $0\leq b\leq em-eh$, we can apply Corollary~\ref{corollary2} again to
the field $\mathbb F_{q^m}$ to get an 
$\mathbb F_q$-good zero-set
$
\mathcal{A}^{++}\subseteq\mathbb F_{q^m}
$ satisfying
\begin{equation}\notag
\begin{aligned}
|\mathcal{A}^{++}|=
p^b|\mathcal{A}^{+}|=
\left[
1+v\left((q^t+2)p^u-1\right)
\right]p^b=
\delta+1.
\end{aligned}
\end{equation}
By Theorem \ref{theorem1}, we obtain $
d\left(\mathcal{C}_{(q,m,\delta)}\right)=\delta.
$ 
This completes the proof.
\end{proof}

The following corollary records several simple special cases of
Theorem~\ref{theorem7}.
\begin{corollary}\label{corollary:theorem7-cases}
Let $q=p^e$ for some prime $p$ and positive integer $e$,
and let $l$  and $m$ be  positive integers with $l\mid m$, and let $t$ be a positive integer with $t\mid l$ and $t<l$. 
Suppose that $\delta\geq2$ has one of the following forms:
\begin{enumerate}
\item[(i)]
$\delta=(q^t+2)p^u-1$ 
for some integer  $u$ with  $0\leq u\leq e(m-l)$.

\item[(ii)]
$\delta=(q^t+2)q^a-1 $
for some integer $a$ with  $0\leq a\leq m-l$.

\item[(iii)]
$\delta=v(q^t+1)$ 
for some positive divisor $v$ of 
$\frac{q^m-1}{q^l-1}.
$
\end{enumerate}
Then
\begin{equation}\notag
d\bigl(\mathcal{C}_{(q,m,\delta)}\bigr)=\delta.
\end{equation}
\end{corollary}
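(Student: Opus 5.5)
Each part of the corollary comes from Theorem~\ref{theorem7} with a suitable choice of the parameters $(l,s,h,v,u,b)$, mirroring how Corollary~\ref{corollary:known-cases} was derived from Theorem~\ref{theorem6}. The integers $l$ and $t$ in the corollary already satisfy $t\mid l$ and $1\le t<l$, so only the divisibility chain $l\mid s\mid h\mid m$ and the ranges for $u$ and $b$ need checking.

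For part~(i), take $s=h=m$, $v=1$ and $b=0$. Then $l\mid s$ holds because $l\mid m$, and $s\mid h\mid m$ is trivial. The value $v=1$ divides $\frac{q^h-1}{q^s-1}=1$. The constraint $0\le u\le e(s-l)=e(m-l)$ is exactly the hypothesis, and $b=0\le e(m-h)=0$ holds. Substituting gives $\delta=\bigl[1+((q^t+2)p^u-1)\bigr]-1=(q^t+2)p^u-1$.

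For part~(ii), apply part~(i) with $u=ea$. The condition $0\le a\le m-l$ gives $0\le ea\le e(m-l)$, and $p^{ea}=q^a$.

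For part~(iii), take $s=l$, $h=m$, $u=b=0$, and let $v$ be the given divisor of $\frac{q^m-1}{q^l-1}=\frac{q^h-1}{q^s-1}$. The chain $l\mid l\mid m\mid m$ holds. We have $u=0\le e(s-l)=0$ and $b=0\le e(m-h)=0$. Substituting gives $\delta=1+v(q^t+2-1)-1=v(q^t+1)$.

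In every case $\delta\ge2$ is assumed, so Theorem~\ref{theorem7} yields $d(\mathcal{C}_{(q,m,\delta)})=\delta$. There is no real obstacle. The only point needing care is in part~(iii): there $s=l$ is forced, so the bound on $u$ collapses to $u=0$, and this is consistent with the choice $u=0$.
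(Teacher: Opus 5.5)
Your proposal is correct and is essentially the paper's proof: it derives (i) from Theorem~\ref{theorem7} with $s=h=m$, $v=1$, $b=0$, obtains (ii) from (i) by setting $u=ea$, and gets (iii) from Theorem~\ref{theorem7} with $s=l$, $h=m$, $u=b=0$. Your explicit checks of the divisibility chain $l\mid s\mid h\mid m$ and of the ranges for $u$ and $b$ are all accurate.
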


\begin{proof}
Taking $s=h=m$, $v=1$, and $b=0$ in Theorem~\ref{theorem7} yields part (i). 
Note  that $q=p^e$. 
Part (ii) follows from part (i) by setting $u=ea$.  
Finally, applying Theorem~\ref{theorem7} with $s=l$, $h=m$, and $u=b=0$ gives part (iii).
\end{proof}

\begingroup\scriptsize\setlength{\tabcolsep}{4pt}\renewcommand{\arraystretch}{0.85}
\begin{longtable}[c]{@{}rrrrrrcc@{}}
\caption{Examples of BCH codes in Theorem 7}\label{tab:examples-theorem-7}\\
\toprule
\multicolumn{1}{@{}c}{$q$} & \multicolumn{1}{c}{$m$} & \multicolumn{1}{c}{$\delta$} & \multicolumn{1}{c}{$n$} & \multicolumn{1}{c}{$\dim$} & \multicolumn{1}{c}{$d$} & \multicolumn{1}{c}{Parameter choices} & \multicolumn{1}{c@{}}{optimality}\\
\midrule
\endfirsthead
\multicolumn{8}{c}{\tablename\ \thetable\ -- continued from previous page}\\
\toprule
\multicolumn{1}{@{}c}{$q$} & \multicolumn{1}{c}{$m$} & \multicolumn{1}{c}{$\delta$} & \multicolumn{1}{c}{$n$} & \multicolumn{1}{c}{$\dim$} & \multicolumn{1}{c}{$d$} & \multicolumn{1}{c}{Parameter choices} & \multicolumn{1}{c@{}}{optimality}\\
\midrule
\endhead
\midrule
\multicolumn{8}{r}{Continued on next page}\\
\endfoot
\bottomrule
\endlastfoot
2 & 2 & 3 & 3 & 1 & 3 & $ (l,s,h,t,u,b,v)=(2,2,2,1,0,0,1) $ & optimal\\
2 & 3 & 3 & 7 & 4 & 3 & $ (l,s,h,t,u,b,v)=(3,3,3,1,0,0,1) $ & optimal\\
2 & 4 & 3 & 15 & 11 & 3 & $ (l,s,h,t,u,b,v)=(2,2,2,1,0,0,1) $ & optimal\\
2 & 4 & 5 & 15 & 7 & 5 & $ (l,s,h,t,u,b,v)=(4,4,4,2,0,0,1) $ & optimal\\
2 & 4 & 7 & 15 & 5 & 7 & $ (l,s,h,t,u,b,v)=(2,2,2,1,0,1,1) $ & optimal\\
2 & 4 & 15 & 15 & 1 & 15 & $ (l,s,h,t,u,b,v)=(2,2,2,1,0,2,1) $ & optimal\\
2 & 5 & 3 & 31 & 26 & 3 & $ (l,s,h,t,u,b,v)=(5,5,5,1,0,0,1) $ & optimal\\
2 & 6 & 3 & 63 & 57 & 3 & $ (l,s,h,t,u,b,v)=(2,2,2,1,0,0,1) $ & optimal\\
2 & 6 & 5 & 63 & 51 & 5 & $ (l,s,h,t,u,b,v)=(6,6,6,2,0,0,1) $ & optimal\\
2 & 6 & 9 & 63 & 39 & 9 & $ (l,s,h,t,u,b,v)=(2,2,6,1,0,0,3) $ & best known\\
2 & 6 & 21 & 63 & 18 & 21 & $ (l,s,h,t,u,b,v)=(2,2,6,1,0,0,7) $ & best known\\
2 & 6 & 27 & 63 & 10 & 27 & $ (l,s,h,t,u,b,v)=(3,3,6,1,0,0,9) $ & best known\\
2 & 6 & 31 & 63 & 7 & 31 & $ (l,s,h,t,u,b,v)=(2,2,2,1,0,3,1) $ & optimal\\
2 & 6 & 63 & 63 & 1 & 63 & $ (l,s,h,t,u,b,v)=(2,2,2,1,0,4,1) $ & optimal\\
2 & 7 & 3 & 127 & 120 & 3 & $ (l,s,h,t,u,b,v)=(7,7,7,1,0,0,1) $ & optimal\\
2 & 8 & 3 & 255 & 247 & 3 & $ (l,s,h,t,u,b,v)=(2,2,2,1,0,0,1) $ & optimal\\
2 & 8 & 5 & 255 & 239 & 5 & $ (l,s,h,t,u,b,v)=(4,4,4,2,0,0,1) $ & optimal\\
2 & 8 & 7 & 255 & 231 & 7 & $ (l,s,h,t,u,b,v)=(2,2,2,1,0,1,1) $ & best known\\
2 & 8 & 11 & 255 & 215 & 11 & $ (l,s,h,t,u,b,v)=(4,4,4,2,0,1,1) $ & best known\\
2 & 8 & 15 & 255 & 199 & 15 & $ (l,s,h,t,u,b,v)=(2,2,2,1,0,2,1) $ & best known\\
2 & 8 & 17 & 255 & 191 & 17 & $ (l,s,h,t,u,b,v)=(8,8,8,4,0,0,1) $ & best known\\
2 & 8 & 23 & 255 & 171 & 23 & $ (l,s,h,t,u,b,v)=(4,4,4,2,0,2,1) $ & best known\\
2 & 8 & 31 & 255 & 139 & 31 & $ (l,s,h,t,u,b,v)=(2,2,2,1,0,3,1) $ & best known\\
2 & 8 & 47 & 255 & 99 & 47 & $ (l,s,h,t,u,b,v)=(4,4,4,2,0,3,1) $ & best known\\
2 & 8 & 51 & 255 & 91 & 51 & $ (l,s,h,t,u,b,v)=(2,2,8,1,0,0,17) $ & best known\\
2 & 8 & 85 & 255 & 47 & 85 & $ (l,s,h,t,u,b,v)=(4,4,8,2,0,0,17) $ & best known\\
2 & 8 & 119 & 255 & 13 & 119 & $ (l,s,h,t,u,b,v)=(2,4,8,1,1,0,17) $ & best known\\
2 & 8 & 127 & 255 & 9 & 127 & $ (l,s,h,t,u,b,v)=(2,2,2,1,0,5,1) $ & optimal\\
2 & 8 & 255 & 255 & 1 & 255 & $ (l,s,h,t,u,b,v)=(2,2,2,1,0,6,1) $ & optimal\\
3 & 2 & 4 & 8 & 4 & 4 & $ (l,s,h,t,u,b,v)=(2,2,2,1,0,0,1) $ & optimal\\
3 & 3 & 4 & 26 & 20 & 4 & $ (l,s,h,t,u,b,v)=(3,3,3,1,0,0,1) $ & optimal\\
3 & 4 & 4 & 80 & 72 & 4 & $ (l,s,h,t,u,b,v)=(2,2,2,1,0,0,1) $ & optimal\\
3 & 4 & 8 & 80 & 60 & 8 & $ (l,s,h,t,u,b,v)=(2,2,4,1,0,0,2) $ & best known\\
3 & 4 & 10 & 80 & 56 & 10 & $ (l,s,h,t,u,b,v)=(4,4,4,2,0,0,1) $ & best known\\
3 & 4 & 14 & 80 & 46 & 14 & $ (l,s,h,t,u,b,v)=(2,2,2,1,0,1,1) $ & best known\\
3 & 4 & 40 & 80 & 16 & 40 & $ (l,s,h,t,u,b,v)=(2,2,4,1,0,0,10) $ & best known\\
3 & 4 & 44 & 80 & 11 & 44 & $ (l,s,h,t,u,b,v)=(2,2,2,1,0,2,1) $ & best known\\
3 & 5 & 4 & 242 & 232 & 4 & $ (l,s,h,t,u,b,v)=(5,5,5,1,0,0,1) $ & optimal\\
4 & 2 & 5 & 15 & 9 & 5 & $ (l,s,h,t,u,b,v)=(2,2,2,1,0,0,1) $ & optimal\\
4 & 3 & 5 & 63 & 54 & 5 & $ (l,s,h,t,u,b,v)=(3,3,3,1,0,0,1) $ & best known\\
4 & 4 & 5 & 255 & 243 & 5 & $ (l,s,h,t,u,b,v)=(2,2,2,1,0,0,1) $ & best known\\
\end{longtable}\endgroup

\subsection{ \texorpdfstring{Designed distance of the form 
$\delta=k\frac{p^a-1}{p^g-1}+1$
}{}}
\begin{theorem}
\label{theorem8}
Let $q=p^e$ for some prime $p$ and positive integer $e$, let  $a$ and $m$ be positive integers with $a\mid em$, and let $g=\gcd(a,e)$. 
 Suppose that $\delta\geq 2$ is an integer of the form 
\begin{equation}\notag
    \delta= k\frac{p^a-1}{p^g-1}+1,  
\end{equation}
where $k$ is an  integer satisfying $1\leq k \leq p^g-2$.  Then 
\begin{equation}\notag
d\bigl(\mathcal{C}_{(q,m,\delta)}\bigr)=\delta.
\end{equation} 
\end{theorem}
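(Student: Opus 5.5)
The plan is to produce an $\mathbb{F}_q$-good zero-set of cardinality $\delta+1=kN+2$ in $\mathbb{F}_{q^m}$, where $N=\frac{p^a-1}{p^g-1}$, and then conclude by Theorem~\ref{theorem1}. I would build it with the shifted inverse construction of Theorem~\ref{theorem4}, starting from a strict $\mathbb{F}_q$-good zero-set of cardinality $kN+1$.

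\textbf{Ambient fields.} Since $a\mid em$, we have $\mathbb{F}_{p^a}\subseteq\mathbb{F}_{q^m}$. Since $g\mid e$, we have $\mathbb{F}_{p^g}\subseteq\mathbb{F}_q$, and also $\mathbb{F}_{p^g}\subseteq\mathbb{F}_{p^a}$ because $g\mid a$. Moreover $N$ divides $p^a-1$, so $p\nmid N$. The norm-type map $x\mapsto x^N$ sends $\mathbb{F}_{p^a}^*$ onto $\mathbb{F}_{p^g}^*$, and each fibre has exactly $N$ elements.

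\textbf{The strict starting set.} Choose any $k$-element subset $T\subseteq\mathbb{F}_{p^g}^*$, which is possible since $k\le p^g-2$. Set
\[
\mathcal{A}=\{0\}\cup\{x\in\mathbb{F}_{p^a}: x^N\in T\}.
\]
By the fibre count, $|\mathcal{A}|=kN+1=\delta$ and
\[
P_{\mathcal{A}}(x)=x\prod_{\tau\in T}(x^N-\tau).
\]
I will verify strictness directly, mirroring the computation in the proof of Theorem~\ref{theorem3}.
\begin{itemize}
\item At $0$: $P_{\mathcal{A}}'(0)=\prod_{\tau\in T}(-\tau)\in\mathbb{F}_{p^g}^*$.
\item At a nonzero $\gamma$ with $\gamma^N=\tau$: $P_{\mathcal{A}}'(\gamma)=N\tau\prod_{\tau'\in T\setminus\{\tau\}}(\tau-\tau')$. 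Here $N\in\mathbb{F}_p^*$ and all other factors lie in $\mathbb{F}_{p^g}^*$.
\end{itemize}
Hence every value of $P_{\mathcal{A}}'$ on $\mathcal{A}$ lies in $\mathbb{F}_{p^g}^*\subseteq\mathbb{F}_q^*$, so $\mathcal{A}$ is a strict $\mathbb{F}_q$-good zero-set with $\lambda=1$. The point of this step is to check that the values land in $\mathbb{F}_q^*$ itself and not merely in a common coset $\lambda\mathbb{F}_q^*$.

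\textbf{The shift $\beta$.} Take $\beta\in\mathbb{F}_{p^a}\setminus\mathcal{A}$. This exists because $|\mathcal{A}|=kN+1\le(p^g-2)N+1<p^a$. For every $\alpha\in\mathcal{A}$ we then have $\alpha-\beta\in\mathbb{F}_{p^a}^*$, so
\[
(\alpha-\beta)^{|\mathcal{A}|-1}=\bigl((\alpha-\beta)^N\bigr)^k\in\mathbb{F}_{p^g}^*\subseteq\mathbb{F}_q^*.
\]
This is exactly condition \eqref{equation12}.

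\textbf{Conclusion.} Theorem~\ref{theorem4}, applied with $\mathbb{F}=\mathbb{F}_{q^m}$, yields an $\mathbb{F}_q$-good zero-set $\mathcal{A}^+\subseteq\mathbb{F}_{q^m}$ with $|\mathcal{A}^+|=\delta+1$. Theorem~\ref{theorem1} then gives $d(\mathcal{C}_{(q,m,\delta)})=\delta$.

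The only real obstacle is arranging \eqref{equation12}. The key observation is that choosing both $\mathcal{A}$ and $\beta$ inside $\mathbb{F}_{p^a}$ makes $|\mathcal{A}|-1=kN$ a multiple of the norm exponent $N$. It also remains to confirm that the range hypothesis $2\le\delta\le q^m-1$ of Theorem~\ref{theorem1} holds. This follows from $\delta+1\le(p^g-2)N+2\le p^a\le q^m$.
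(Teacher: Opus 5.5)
Your proposal is correct and matches the paper's proof step for step. Both use the same strict set $\{0\}\cup\{x\in\mathbb{F}_{p^a}^*: x^N\in T\}$ with the same fibre count and derivative computation, then the shifted inverse construction of Theorem~\ref{theorem4} with $\beta\in\mathbb{F}_{p^a}\setminus\mathcal{A}$, then Theorem~\ref{theorem1}. The only difference is cosmetic: you find $\beta$ by counting, while the paper takes $\beta$ to be an $N$-th root of some $\eta\in\mathbb{F}_{p^g}^*\setminus T$, and both describe exactly the same set of admissible shifts.
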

\begin{proof}
Since $a\mid em$, $g=\gcd(a,e)$ and $q=p^e$, we first  have
\begin{equation}\notag
\mathbb{F}_{p^a}\subseteq \mathbb{F}_{p^{em}}=\mathbb{F}_{q^m}
\quad \hbox{and}\quad 
\mathbb{F}_{p^g}\subseteq
\mathbb{F}_{p^{e}}=\mathbb{F}_{q}.
\end{equation}
For simplicity, let $v=\frac{p^a-1}{p^g-1}$. 
Choose a subset
$
\mathcal{Y}\subseteq \mathbb{F}_{p^g}^*
$ such that $
|\mathcal{Y}|=k,
$ 
and define
\begin{equation}\notag
\mathcal{A}
=
\{0\}\cup
\left\{
x\in \mathbb{F}_{p^a}^*: x^v\in \mathcal{Y}
\right\}.
\end{equation}
By Lemma \ref{lemmas4}, for each $\gamma\in \mathcal{Y}$, there exist  $v$ distinct solutions to $x^v=\gamma$ in $\mathbb{F}_{p^a}^{*}$, and the solution sets corresponding to distinct values of $\gamma$ are pairwise disjoint. 
 Therefore, 
\begin{equation}\notag
   |\mathcal{A}|=1+kv= \delta,
\end{equation}
and  the vanishing polynomial of $\mathcal{A}$ is given by 
\begin{equation}\notag
P_{\mathcal{A}}(x)
=
x\prod_{\gamma\in \mathcal{Y}}(x^{v}-\gamma).
\end{equation}

We next show that $\mathcal{A}$ is a strict $\mathbb{F}_q$-good zero-set.  
A direct calculation gives
\begin{equation}\notag
P_{\mathcal{A}}'(0)
=
\prod_{\gamma\in \mathcal{Y}}(-\gamma)\in \mathbb{F}_{p^g}^{*}. 
\end{equation}
For each  $\alpha\in \mathcal{A}\setminus\{0\}$, we have $\alpha^v\in \mathcal{Y} \subseteq \mathbb{F}_{p^g}^{*}$. Therefore, we obtain 
\begin{equation}\notag
P_{\mathcal{A}}'(\alpha)
=
v \alpha^{v}
\prod_{\substack{\gamma\in \mathcal{Y}\\ \gamma \neq \alpha^v}} 
(\alpha^v-\gamma)\in  v\mathbb{F}_{p^g}^{*}.
\end{equation}
Noting  that 
$p\nmid v$ and recalling  $\mathbb{F}_{p^g}\subseteq \mathbb{F}_{q}$, we obtain  $
    v\mathbb{F}_{p^g}^*\subseteq \mathbb{F}_{p^g}^*\subseteq \mathbb{F}_{q}^*.
$ 
Therefore, we can conclude that 
\begin{equation}\notag
   P_{\mathcal{A}}'(\alpha)\in \mathbb{F}_q^* \quad \hbox{for all }\alpha\in \mathcal{A}.  
\end{equation}
It follows that  $\mathcal{A}$ is a strict $\mathbb{F}_q$-good zero-set. 

We now construct an $\mathbb{F}_q$-good zero-set from $\mathcal{A}$. 
Since
$|\mathbb{F}_{p^g}^*|=p^g-1
$ and $
k\leq p^g-2$,
there exists 
$
\eta\in \mathbb{F}_{p^g}^*\setminus \mathcal{Y}.
$ 
By Lemma \ref{lemmas4} again, there exists 
 $\beta\in \mathbb{F}_{p^a}^*$ such
that
$
\beta^v=\eta.
$ 
Since $\eta\notin \mathcal{Y}$, we have $\beta\in \mathbb{F}_{p^a}^*\setminus \mathcal{A}$. 
Furthermore, 
for any $\alpha\in \mathcal{A}$, we have $\alpha-\beta\in \mathbb{F}_{p^a}^*$ and hence 
\begin{equation*}
    ((\alpha-\beta)^{v})^{p^g-1}=(\alpha-\beta)^{p^a-1}=1.
\end{equation*}
This implies $(\alpha-\beta)^v\in \mathbb{F}_{p^g}^*$. 
Consequently, 
\begin{equation}\notag
\begin{aligned}
(\alpha-\beta)^{|\mathcal{A}|-1}
&=
\left((\alpha-\beta)^{v}\right)^k  \in \mathbb{F}_{p^g}^*\subseteq\mathbb{F}_q^*.
\end{aligned}
\end{equation}
Applying Theorem~\ref{theorem4}, we obtain an $\mathbb{F}_q$-good zero-set 
$$\mathcal{A}^{+}=\{0\}\cup \left\{\frac{1}{\alpha-\beta}:\alpha\in \mathcal{A}\right\}$$ satisfying 
\begin{equation}\notag
    |\mathcal{A}^{+}|=|\mathcal{A}|+1=\delta+1.
\end{equation}
Note  that $\mathcal{A}^{+}\subseteq \mathbb{F}_{p^a}\subseteq \mathbb{F}_{q^m}$. Therefore, by 
applying Theorem~\ref{theorem1}, we obtain 
$d\bigl(\mathcal{C}_{(q,m,\delta)}\bigr)=\delta$. This completes the proof.
\end{proof}

Combining Theorem \ref{theorem8} with the dimension formula given in
\cite[Theorem 4.10]{CherchemEtAl2020}, we obtain the following corollary,
which extends \cite[Theorem 7]{Chen2026} by removing the restriction
$k\mid(q-1)$.
\begin{corollary}\label{corollary:theorem8}
Let $q$ be a prime power, and let $m$ be a positive integer.
Let
\begin{equation}\notag
\delta=k\frac{q^m-1}{q-1}+1,
\end{equation}
where $k$ is an integer satisfying $1\leq k\leq q-2$.
Then $\mathcal{C}_{(q,m,\delta)}$ has minimum distance $\delta$
and dimension $(q-k)^m-1$.
\end{corollary}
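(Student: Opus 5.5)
The minimum-distance claim should follow directly from Theorem~\ref{theorem8}, and the dimension claim from the cited formula of \cite[Theorem 4.10]{CherchemEtAl2020}.

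\textbf{Minimum distance.} Write $q=p^e$ and apply Theorem~\ref{theorem8} with $a=em$. Then $a\mid em$ holds trivially, and $g=\gcd(em,e)=e$, so $p^g=q$. Consequently
\[
\frac{p^a-1}{p^g-1}=\frac{q^m-1}{q-1},
\]
and the range $1\le k\le p^g-2$ becomes exactly $1\le k\le q-2$. The degenerate case $m=1$ needs a separate look. There $\delta=k+1\le q-1=n$, so $\delta$ lies in the admissible range $2\le\delta\le n$. For general $m$ we have
\[
\delta\le (q-2)\frac{q^m-1}{q-1}+1\le q^m-1 ,
\]
so Theorem~\ref{theorem1}, invoked inside Theorem~\ref{theorem8}, applies. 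Hence $d(\mathcal{C}_{(q,m,\delta)})=\delta$.

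\textbf{Dimension.} The plan is to quote \cite[Theorem 4.10]{CherchemEtAl2020}, which should give the dimension of $\mathcal{C}_{(q,m,\delta)}$ for designed distances of the form $k\frac{q^m-1}{q-1}+1$. If I instead had to verify it directly, I would use the following count. The dimension equals the number of exponents $i\in\{0,\dots,n-1\}$ whose $q$-cyclotomic coset contains no element of $\{1,\dots,\delta-1\}$, that is, the number of $i$ with $\min_j (q^j i \bmod n)\ge \delta$.

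Write $i$ in base $q$ with digits $i_0,\dots,i_{m-1}$. Cyclic shifts of the digit vector give the coset, and the minimal coset representative is at least $k\frac{q^m-1}{q-1}+1$ precisely when the representation is lexicographically beyond $(k,k,\dots,k)$ in every rotation. I expect this to reduce to the condition that every digit $i_j\ge k$ but not all digits equal to $k$. The index $i=0$ also contributes, since its coset $\{0\}$ is outside $\{1,\dots,\delta-1\}$. Tallying these, digit vectors in $\{k,\dots,q-1\}^m$ minus the all-$k$ vector, plus $i=0$, gives $(q-k)^m-1+1$.

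\textbf{Main obstacle.} This tally overshoots the claimed $(q-k)^m-1$ by one, and resolving the discrepancy is the step I expect to be hardest. The likely resolution is that the all-$(q-1)$ vector corresponds to $i=n\equiv 0$, so the zero exponent is already counted among the digit vectors and should not be added separately. I would check this boundary case carefully against \cite{CherchemEtAl2020}. Since the corollary explicitly cites that reference, I would rely on it for the dimension and present only the reduction to Theorem~\ref{theorem8} in detail.
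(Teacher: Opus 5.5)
Your proposal is correct and follows the paper's proof exactly: take $a=em$ in Theorem~\ref{theorem8}, so that $g=\gcd(em,e)=e$ and $\frac{p^a-1}{p^g-1}=\frac{q^m-1}{q-1}$, then cite \cite[Theorem~4.10]{CherchemEtAl2020} for the dimension. Your suspected fix for the off-by-one in the optional direct count is also right: the all-$(q-1)$ digit vector represents $n\equiv 0$ and is not an index in $\{0,\dots,n-1\}$, so there are $(q-k)^m-2$ admissible nonzero exponents plus $i=0$, which gives $(q-k)^m-1$.
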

\begin{proof}
Let $q=p^e$, where $p$ is a prime and $e$ is a positive integer.
Taking $a=em$ in Theorem~\ref{theorem8}, we have
$g=\gcd(em,e)=e$ and
\begin{equation}\notag
k\frac{p^a-1}{p^g-1}+1
=
k\frac{q^m-1}{q-1}+1
=
\delta.
\end{equation}
Thus, Theorem~\ref{theorem8} gives
$d(\mathcal{C}_{(q,m,\delta)})=\delta$.
The dimension follows from
\cite[Theorem~4.10]{CherchemEtAl2020}.
\end{proof}

\begingroup\scriptsize\setlength{\tabcolsep}{4pt}\renewcommand{\arraystretch}{0.85}
\begin{longtable}[c]{@{}rrrrrrcc@{}}
\caption{Examples of BCH codes in Theorem 8}\label{tab:examples-theorem-8}\\
\toprule
\multicolumn{1}{@{}c}{$q$} & \multicolumn{1}{c}{$m$} & \multicolumn{1}{c}{$\delta$} & \multicolumn{1}{c}{$n$} & \multicolumn{1}{c}{$\dim$} & \multicolumn{1}{c}{$d$} & \multicolumn{1}{c}{Parameter choices} & \multicolumn{1}{c@{}}{optimality}\\
\midrule
\endfirsthead
\multicolumn{8}{c}{\tablename\ \thetable\ -- continued from previous page}\\
\toprule
\multicolumn{1}{@{}c}{$q$} & \multicolumn{1}{c}{$m$} & \multicolumn{1}{c}{$\delta$} & \multicolumn{1}{c}{$n$} & \multicolumn{1}{c}{$\dim$} & \multicolumn{1}{c}{$d$} & \multicolumn{1}{c}{Parameter choices} & \multicolumn{1}{c@{}}{optimality}\\
\midrule
\endhead
\midrule
\multicolumn{8}{r}{Continued on next page}\\
\endfoot
\bottomrule
\endlastfoot
3 & 1 & 2 & 2 & 1 & 2 & $ (a,g,k)=(1,1,1) $ & optimal\\
3 & 2 & 2 & 8 & 6 & 2 & $ (a,g,k)=(1,1,1) $ & optimal\\
3 & 2 & 5 & 8 & 3 & 5 & $ (a,g,k)=(2,1,1) $ & optimal\\
3 & 3 & 2 & 26 & 23 & 2 & $ (a,g,k)=(1,1,1) $ & optimal\\
3 & 3 & 14 & 26 & 7 & 14 & $ (a,g,k)=(3,1,1) $ & optimal\\
3 & 4 & 2 & 80 & 76 & 2 & $ (a,g,k)=(1,1,1) $ & optimal\\
3 & 4 & 41 & 80 & 15 & 41 & $ (a,g,k)=(4,1,1) $ & best known\\
3 & 5 & 2 & 242 & 237 & 2 & $ (a,g,k)=(1,1,1) $ & optimal\\
3 & 5 & 122 & 242 & 31 & 122 & $ (a,g,k)=(5,1,1) $ & best known\\
4 & 1 & 2 & 3 & 2 & 2 & $ (a,g,k)=(2,2,1) $ & optimal\\
4 & 1 & 3 & 3 & 1 & 3 & $ (a,g,k)=(2,2,2) $ & optimal\\
4 & 2 & 2 & 15 & 13 & 2 & $ (a,g,k)=(2,2,1) $ & optimal\\
4 & 2 & 6 & 15 & 8 & 6 & $ (a,g,k)=(4,2,1) $ & optimal\\
4 & 2 & 11 & 15 & 3 & 11 & $ (a,g,k)=(4,2,2) $ & optimal\\
4 & 3 & 2 & 63 & 60 & 2 & $ (a,g,k)=(2,2,1) $ & optimal\\
4 & 3 & 22 & 63 & 26 & 22 & $ (a,g,k)=(6,2,1) $ & best known\\
4 & 3 & 43 & 63 & 7 & 43 & $ (a,g,k)=(6,2,2) $ & optimal\\
4 & 4 & 2 & 255 & 251 & 2 & $ (a,g,k)=(2,2,1) $ & optimal\\
4 & 4 & 171 & 255 & 15 & 171 & $ (a,g,k)=(8,2,2) $ & best known\\
\end{longtable}\endgroup

\subsection{Designed distances of the form
\texorpdfstring{$q^{2r}+1$ and $q^{3r}+1$}
{q to the power 2r plus 1 and q to the power 3r plus 1}}

\begin{theorem}\label{theorem9}
Let $q=p^e$ for some prime $p$ and positive integer $e$, and let $r$ be a positive integer. 
\begin{enumerate}
\item[]
If $5r\mid m$, then
\begin{equation}\label{thm10e1}
d\bigl(\mathcal{C}_{(q,m,q^{2r}+1)}\bigr) = q^{2r}+1.
\end{equation}

\item[]
If $8r\mid m$, then
\begin{equation}\label{thm10e2}
d\bigl(\mathcal{C}_{(q,m,q^{3r}+1)}\bigr) = q^{3r}+1.
\end{equation}
\end{enumerate}
\end{theorem}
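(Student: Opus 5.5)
The plan is to apply Theorem~\ref{theorem5} with $Q=q^r$, exponent $r=1$ there (which is prime to $p$), and a polynomial $U$ of degree $Q^2+1$ (respectively $Q^3+1$) that splits completely over $\mathbb{F}_{Q^5}$ (respectively $\mathbb{F}_{Q^8}$). Theorem~\ref{theorem5} then gives an $\mathbb{F}_q$-good zero-set of cardinality $Q^2+2$ (respectively $Q^3+2$). Since $5r\mid m$ (respectively $8r\mid m$), this set lies in $\mathbb{F}_{q^m}$, and Theorem~\ref{theorem1} yields \eqref{thm10e1} (respectively \eqref{thm10e2}).

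Concretely, I would take
\[
U(x)=x^{Q^2+1}+a x^{Q+1}+bx+\beta
\quad\text{and}\quad
U(x)=x^{Q^3+1}+a x^{Q^2+1}+b x^{Q+1}+c x+\beta ,
\]
that is, $U(x)=\beta+xL(x)$, where $L(x)=x^{Q^2}+ax^Q+b x$ (or its analogue with one more term) is a $Q$-linearized polynomial. Since $Q$ is a power of $p$, every $Q$-linearized polynomial is a $p$-th power,
\[
L(x)=H(x)^p,\qquad H(x)=x^{Q^2/p}+a^{1/p}x^{Q/p}+b^{1/p},
\]
so $U$ has exactly the shape $\beta+x^1H(x)^p$ with $H$ monic, as Theorem~\ref{theorem5} requires. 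Everything therefore reduces to choosing $a,b,(c),\beta$ so that $U$ splits completely over $\mathbb{F}_{Q^5}$, respectively $\mathbb{F}_{Q^8}$. Simplicity of the roots is automatic from the proof of Theorem~\ref{theorem5}.

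To get splitting, I would choose $L$ so that its kernel $W$ is an $\mathbb{F}_Q$-subspace of the ambient field $\mathbb{K}=\mathbb{F}_{Q^5}$ (respectively $\mathbb{F}_{Q^8}$) of dimension $2$ (respectively $3$). Then
\[
L(x)=\prod_{w\in W}(x-w),
\]
and $L$ is an $\mathbb{F}_Q$-linear map on $\mathbb{K}$. The equation $xL(x)=-\beta$ is then studied by substituting $x=1/y$. This turns $U(y^{-1})y^{Q^2+1}$ into a polynomial of the form $\beta y^{Q^2+1}+b y^{Q^2}+a y^{Q^2-Q}+1$, which relates the roots to the $Q$-adjoint $L^{\ast}$ of $L$. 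I would then try to identify the root set with an explicit geometric object, for instance the set of $y$ with $y\,L^{\ast}(y)$ constant, inside $\mathbb{K}$. The natural candidates are pairs $(W,\beta)$ for which $\mathrm{Tr}_{\mathbb{K}/\mathbb{F}_Q}$-duality gives $W^{\perp}$ of dimension $3$ (respectively $5$). The extension degrees $5=2\cdot 2+1$ and $8=3\cdot 3-1$ strongly suggest a counting or duality argument of this kind. The final step is to show that the number of roots in $\mathbb{K}$ equals $\deg U$. I would do this by counting solutions of $xL(x)=-\beta$ through the fibres of the map $x\mapsto xL(x)$ on cosets of $W$, using character sums or a norm argument.

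The main obstacle is this splitting step: exhibiting explicit coefficients and proving that all $Q^2+1$ (respectively $Q^3+1$) roots lie in $\mathbb{K}$. If the duality approach stalls, I would first work out the smallest cases, $Q=2$ with $\mathbb{K}=\mathbb{F}_{2^5}$, and $Q=2$ with $\mathbb{K}=\mathbb{F}_{2^8}$, to guess the coefficients. I would then try to prove splitting via a factorization of $U(x)$ over $\mathbb{K}$ into linearized-type pieces, in the spirit of Bluher's results on $x^{Q+1}+ax+b$ applied to a suitable composition. In the cubic case the extra degree of freedom $c$ should be used to force the required structure.
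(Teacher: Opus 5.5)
Your reduction is the same as the paper's. With $Q=q^r$, Theorem~\ref{theorem5} (exponent $1$) and Theorem~\ref{theorem1} reduce the statement to finding a polynomial $\beta+xH(x)^p$ of degree $Q^2+1$ (resp.\ $Q^3+1$) whose roots all lie in $\mathbb{F}_{Q^5}$ (resp.\ $\mathbb{F}_{Q^8}$). That reduction is the easy part. The real content is exhibiting such a polynomial and proving that it splits, and your proposal leaves this undone.

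The ansatz you set up for that step is also flawed:
\begin{itemize}
\item A polynomial with a nonzero $x^1$ term is never a $p$-th power. Your $H$ gives $H(x)^p=x^{Q^2}+ax^Q+b$, not $x^{Q^2}+ax^Q+bx$. So your displayed $U=\beta+x(x^{Q^2}+ax^Q+b)$ has the right shape, but it is not $\beta+xL(x)$ with $L$ a separable $Q$-linearized polynomial with $2$-dimensional kernel.
\item If you really take $U=\beta+xL(x)$ with $L(x)=x^{Q^2}+ax^Q+bx$ and $b\neq0$, then $P_{\mathcal A}'(0)=\beta$, while at each root $P_{\mathcal A}'(\alpha)=\alpha U'(\alpha)=-\beta+b\alpha^2$. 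Goodness would force $\alpha^2\in(\beta/b)\mathbb{F}_q$ for all $Q^2+1$ roots, which is impossible.
\end{itemize}
So the kernel, duality and character-sum programme is aimed at an object that cannot produce a good zero-set.

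Even the consistent family $x^{Q^2+1}+ax^{Q+1}+bx+\beta$ fails once $Q$ is large.
\begin{itemize}
\item For a root $\alpha$, set $x_i=\alpha^{Q^i}$. Then $x_{i+2}=-\beta^{Q^i}/x_i-a^{Q^i}x_{i+1}-b^{Q^i}$, so $x_5=R(x_0,x_1)$ for a rational function $R$ whose bidegree is bounded independently of $Q$.
\item All $Q^2+1$ distinct roots satisfy $R(\alpha,\alpha^Q)=\alpha$. After clearing denominators, this is a polynomial in $\alpha$ of degree $O(Q)<Q^2+1$, so it vanishes identically. Its $X_0$-degree is less than $Q$, so $R(X_0,X_1)=X_0$ identically.
\item This is impossible. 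If $a\neq0$, then $x_5\sim a^{1+Q+Q^2+Q^3}X_1$ as $X_1\to\infty$. If $a=0$, then $x_5$ depends on $X_1$ alone.
\end{itemize}
Working out $Q=2$ would mislead you. There the paper's polynomial $x^5+x^3+1$ lies in your family only because $Q+1=Q^2-Q+1$ when $Q=2$.

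The missing idea is the right choice of monomials. Take $H(x)^p=x^{Q^2}+x^{Q^2-Q}$, that is, $U(x)=x^{Q^2+1}+x^{Q^2-Q+1}-1$. Your instinct to invert is correct. Divide $U(\alpha)=0$ by $\alpha^{Q^2+1}$ and set $x_i=\alpha^{-Q^i}$; this gives $1+x_{i+1}=x_ix_{i+2}$ for all $i$. This recurrence is globally $5$-periodic: direct computation gives $x_5=x_0$, i.e.\ $\alpha^{Q^5}=\alpha$. So every root lies in $\mathbb{F}_{Q^5}$. No counting argument is needed, because Theorem~\ref{theorem5} already shows the roots are simple.

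Likewise, $\widetilde U(x)=x^{Q^3+1}+x^{Q^3-Q+1}+x^{Q^3-Q^2+1}-1$ leads to $1+y_{i+1}+y_{i+2}=y_iy_{i+3}$, which has period $8$. This gives the second statement.
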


\begin{proof}
We first prove \eqref{thm10e1}. Let
\begin{equation}\notag
H(x)=x^{\frac{q^{2r}}{p}}+x^{\frac{q^{2r}-q^r}{p}}
\end{equation}
and
\begin{equation}\notag
\begin{aligned}
U(x)
&=-1+xH(x)^p\\
&=x^{q^{2r}+1}+x^{q^{2r}-q^r+1}-1.
\end{aligned}
\end{equation}
Then $U(x)$ has the form
required in Theorem \ref{theorem5}  with $\beta=-1$.

We now show that $U(x)$ splits completely over
$\mathbb{F}_{q^{5r}}$, that is, all its roots lie in
$\mathbb{F}_{q^{5r}}$.
Let $\alpha$ be a root
of $U(x)$. Noting that $U(0)=-1$, we have $\alpha\neq0$. Moreover,
\begin{equation}\notag
\alpha^{q^{2r}+1}+\alpha^{q^{2r}-q^r+1}=1.
\end{equation}
Dividing both sides by $\alpha^{q^{2r}+1}$ gives
\begin{equation}\label{bz}
1+\alpha^{-q^r}=\alpha^{-q^{2r}-1}.
\end{equation}
For each integer $i\geq0$, define
$
x_i=\alpha^{-q^{ri}}.
$ 
Then \eqref{bz} becomes
\begin{equation}\notag
1+x_1=x_0x_2
\end{equation}
Repeatedly applying the map $x\mapsto x^{q^r}$ to both sides yields
\begin{equation}\notag
1+x_{i+1}=x_ix_{i+2}
\qquad\text{for all }i\geq0.
\end{equation}
A direct calculation then gives
\begin{equation}\notag
x_2=\frac{x_1+1}{x_0},\quad
x_3=\frac{x_0+x_1+1}{x_0x_1},\quad
x_4=\frac{x_0+1}{x_1},\quad
x_5=x_0.
\end{equation}
The last equality is equivalent to 
$\alpha^{-q^{5r}}=\alpha^{-1}$, and hence
$\alpha^{q^{5r}}=\alpha$.
Thus, we have $\alpha\in\mathbb{F}_{q^{5r}}$.
Therefore, $U(x)$ splits completely over $\mathbb{F}_{q^{5r}}$.

By Theorem \ref{theorem5}, the set
\begin{equation}\notag
\mathcal{A}
=
\{0\}\cup
\{\alpha\in\mathbb{F}_{q^{5r}}:U(\alpha)=0\}
\end{equation}
is an $\mathbb{F}_q$-good zero-set satisfying
\begin{equation}\notag
|\mathcal{A}|=\deg(U)+1=q^{2r}+2.
\end{equation}
Since $5r\mid m$, we have
$\mathbb{F}_{q^{5r}}\subseteq\mathbb{F}_{q^m}$,
and hence $\mathcal{A}\subseteq\mathbb{F}_{q^m}$.
Then applying Theorem~\ref{theorem1}, we obtain \eqref{thm10e1}.

We next prove \eqref{thm10e2} under the assumption $8r\mid m$.
Let
\begin{equation}\notag
\widetilde{H}(x)
= x^{\frac{q^{3r}}{p}}
+ x^{\frac{q^{3r}-q^r}{p}}
+ x^{\frac{q^{3r}-q^{2r}}{p}},
\end{equation}
and
\begin{equation}\notag
\begin{aligned}
\widetilde{U}(x)
&= -1+x\widetilde{H}(x)^p \\
&= x^{q^{3r}+1}
+ x^{q^{3r}-q^r+1}
+ x^{q^{3r}-q^{2r}+1}-1.
\end{aligned}
\end{equation}
We now show that $\widetilde{U}(x)$ splits completely over
$\mathbb{F}_{q^{8r}}$, that is, all its roots lie in
$\mathbb{F}_{q^{8r}}$.
Let $\alpha$ be a root of
$\widetilde{U}(x)$. Then $\alpha\neq0$ and
\begin{equation}\notag
\alpha^{q^{3r}+1}
+\alpha^{q^{3r}-q^r+1}
+\alpha^{q^{3r}-q^{2r}+1}=1.
\end{equation}
Dividing both sides by $\alpha^{q^{3r}+1}$ yields
\begin{equation}\notag
1+\alpha^{-q^r}+\alpha^{-q^{2r}}
=\alpha^{-q^{3r}-1}.
\end{equation}
For each integer $i\geq0$, define $y_i=\alpha^{-q^{ri}}$.
Then the preceding equality becomes
\begin{equation}\notag
1+y_1+y_2=y_0y_3.
\end{equation}
Repeatedly applying the map $y\mapsto y^{q^r}$ to both sides yields
\begin{equation}\label{eq:eight-recurrence}
1+y_{i+1}+y_{i+2}=y_iy_{i+3}
\qquad\text{for all }i\geq0.
\end{equation}
Successively expressing $y_3,\ldots,y_8$ in terms of
$y_0$, $y_1$, and $y_2$ using \eqref{eq:eight-recurrence}, we obtain
\begin{equation}\notag
y_8=y_0.
\end{equation}
Equivalently,
\begin{equation}\notag
\alpha^{-q^{8r}}=\alpha^{-1}.
\end{equation}
Hence $\alpha^{q^{8r}}=\alpha$, so
$\alpha\in\mathbb{F}_{q^{8r}}$, proving that
$\widetilde{U}(x)$ splits completely over $\mathbb{F}_{q^{8r}}$.

By Theorem~\ref{theorem5}, the set
\begin{equation}\notag
\widetilde{\mathcal{A}}
=\{0\}\cup
\{\alpha\in\mathbb{F}_{q^{8r}}:\widetilde{U}(\alpha)=0\}
\end{equation}
is an $\mathbb{F}_q$-good zero-set with
\begin{equation}\notag
|\widetilde{\mathcal{A}}|
=\deg(\widetilde{U})+1=q^{3r}+2.
\end{equation}
Since $8r\mid m$, we have
$\mathbb{F}_{q^{8r}}\subseteq\mathbb{F}_{q^m}$, and thus
$\widetilde{\mathcal{A}}\subseteq\mathbb{F}_{q^m}$.
Applying Theorem~\ref{theorem1}, we obtain \eqref{thm10e2}.
This completes the proof.
\end{proof}

\begingroup\scriptsize\setlength{\tabcolsep}{4pt}\renewcommand{\arraystretch}{0.85}
\begin{longtable}[c]{@{}rrrrrrc@{}}
\caption{Examples of BCH codes in Theorem 9}\label{tab:examples-theorem-9}\\
\toprule
\multicolumn{1}{@{}c}{$q$} & \multicolumn{1}{c}{$m$} & \multicolumn{1}{c}{$\delta$} & \multicolumn{1}{c}{$n$} & \multicolumn{1}{c}{$\dim$} & \multicolumn{1}{c}{$d$} & \multicolumn{1}{c@{}}{optimality}\\
\midrule
\endfirsthead
\multicolumn{7}{c}{\tablename\ \thetable\ -- continued from previous page}\\
\toprule
\multicolumn{1}{@{}c}{$q$} & \multicolumn{1}{c}{$m$} & \multicolumn{1}{c}{$\delta$} & \multicolumn{1}{c}{$n$} & \multicolumn{1}{c}{$\dim$} & \multicolumn{1}{c}{$d$} & \multicolumn{1}{c@{}}{optimality}\\
\midrule
\endhead
\midrule
\multicolumn{7}{r}{Continued on next page}\\
\endfoot
\bottomrule
\endlastfoot
2 & 5 & 5 & 31 & 21 & 5 & optimal\\
2 & 8 & 9 & 255 & 223 & 9 & best known\\
3 & 5 & 10 & 242 & 212 & 10 & best known\\
\end{longtable}\endgroup

\section{Conclusion}
We introduced $\mathbb{F}_q$-good zero-sets and reduced the question
of whether $d(\mathcal{C}_{(q,m,\delta)})=\delta$ holds to the
existence of such a set of cardinality $\delta+1$ in
$\mathbb{F}_{q^m}$.
We developed several methods for constructing $\mathbb{F}_q$-good
zero-sets. Using these methods, together with  $\mathbb{F}_q$-good zero-sets arising
from known minimum-distance results, we obtained families of
primitive narrow-sense BCH codes attaining the following designed
distances:
\begin{itemize}
    \item $\delta=\bigl(1+v(tp^u-1)\bigr)p^b-1$,
    under the conditions in Theorem~\ref{theorem6};

    \item $\delta=\bigl(1+v((q^t+2)p^u-1)\bigr)p^b-1$,
    under the conditions in Theorem~\ref{theorem7};

    \item $\delta=k\frac{p^a-1}{p^g-1}+1$, where $g=\gcd(a,e)$,
    under the conditions in Theorem~\ref{theorem8};

    \item $\delta=q^{2r}+1$ for positive integers $r$ with $5r\mid m$,
    as established in Theorem~\ref{theorem9};

    \item $\delta=q^{3r}+1$ for positive integers $r$ with $8r\mid m$,
    as established in Theorem~\ref{theorem9}.
\end{itemize}
These families include several known results as special cases.
Our methods for constructing new good zero-sets from known ones may
offer a  route to further families of BCH codes attaining
their designed distances.

\section{Acknowledgments}
The author gratefully acknowledges Professor Maosheng Xiong
and Professor Cunsheng Ding for their financial support
and helpful comments. The author used ChatGPT to assist with literature searches, exploratory mathematical discussions, and language editing. The mathematical results, arguments, and proofs presented in this paper were developed and independently verified by the author, who takes full responsibility for the content of the work.

\appendix 
\begin{proof}[\normalfont\bfseries Proof of Lemma~\ref{adlemma1}] 
Let $j$ be an integer with $0\leq j\leq |\mathcal{D}|-2$.
For each $\beta\in \mathcal{D}$, define the polynomial
    \begin{equation}\notag
        L_{\beta}(x) =  \frac{\prod\limits_{\alpha\in \mathcal{D}\setminus \{\beta\}}(x-\alpha)}{\prod\limits_{\alpha\in \mathcal{D}\setminus\{ \beta\}}(\beta-\alpha)}\beta^j. 
    \end{equation}
    For each $\beta, \gamma\in \mathcal{D}$,
    it can be easily verified that 
    \begin{equation*}
       L_{\beta}(\gamma)=\begin{cases}
           0& \hbox{if } \gamma\neq \beta, \\
           \beta^j&\hbox{if }\gamma=\beta. 
       \end{cases}
\end{equation*}
Thus, we have 
$$\sum_{\beta\in \mathcal{D}}L_{\beta}(\gamma)-\gamma^j=0\quad \hbox{for all } \gamma\in \mathcal{D}.$$ 
Note that the degree of the polynomial $\sum_{\beta\in \mathcal{D}}L_{\beta}(x)-x^j$ is at most $|\mathcal{D}|-1$.  It follows that  $\sum_{\beta\in \mathcal{D}}L_{\beta}(x)-x^j$ is the  zero polynomial, and hence 
\begin{equation}\label{equation2}
        \sum_{\beta\in \mathcal{D}}L_{\beta}(x) = x^j. 
    \end{equation} 
Recall that $P_{\mathcal{D}}'(\beta)=\prod_{\alpha\in \mathcal{D}\setminus\{\beta\}}(\beta-\alpha)$ for each $\beta\in \mathcal{D}$. The coefficient of $x^{|\mathcal{D}|-1}$ on the left-hand side of \eqref{equation2} is exactly $\sum_{\beta\in \mathcal{D}}\frac{\beta^j}{P_{\mathcal{D}}'(\beta)}$. On the right-hand side, since $j \le |\mathcal{D}|-2$, the coefficient of $x^{|\mathcal{D}|-1}$ in $x^j$ is $0$. Comparing the coefficients of $x^{|\mathcal{D}|-1}$ on both sides yields \eqref{lemma1e1}.
\end{proof}

\hspace*{\fill}

\noindent\textbf{Proof of Lemma \ref{lemma2}. }
   Consider the $\mathbb{F}_p$-linear map $\phi: \mathbb{F}\to \mathbb{F}$ defined as 
   $
\phi(y)=y^p-y.
$ 
We first show that \begin{equation}\label{lema3e1}
    \operatorname{Im}(\phi)=
\ker\bigl(\operatorname{Tr}_{\mathbb{F}/\mathbb{F}_p}\bigr).
\end{equation} 
For any   $y\in \mathbb{F}$, we have 
\begin{equation}\notag \Tr_{\mathbb{F}/\mathbb{F}_{p}}(\phi(y)) = \Tr_{\mathbb{F}/\mathbb{F}_{p}}(y^p)-\Tr_{\mathbb{F}/\mathbb{F}_{p}}(y)=0.
\end{equation}
This implies that $\operatorname{Im}(\phi)\subseteq\ker(\Tr_{\mathbb{F}/\mathbb{F}_{p}})$.

Note that $\phi(y)=0$ if and only if $y\in\mathbb{F}_p$. Therefore,
$\ker(\phi)=\mathbb{F}_p$. By the rank--nullity theorem,
\begin{equation}\notag
\dim_{\mathbb{F}_p}\operatorname{Im}(\phi)
=\dim_{\mathbb{F}_p}(\mathbb{F})-\dim_{\mathbb{F}_p}\ker(\phi)= \dim_{\mathbb{F}_p}(\mathbb{F})-\dim_{\mathbb{F}_p}(\mathbb{F}_p).
\end{equation}
Moreover, the trace map $\Tr_{\mathbb{F}/\mathbb{F}_{p}}$ is an $\mathbb{F}_{p}$-linear map satisfying  $\operatorname{Im}(\Tr_{\mathbb{F}/\mathbb{F}_{p}})=\mathbb{F}_{p}$. Therefore,
\begin{equation}\notag   \dim_{\mathbb{F}_p}\ker(\Tr_{\mathbb{F}/\mathbb{F}_{p}})=\dim_{\mathbb{F}_p}(\mathbb{F})-\dim_{\mathbb{F}_p}\operatorname{Im}(\Tr_{\mathbb{F}/\mathbb{F}_{p}})=
\dim_{\mathbb{F}_p}(\mathbb{F})-\dim_{\mathbb{F}_p}(\mathbb{F}_p).
\end{equation}
 Combining  $\operatorname{Im}(\phi)\subseteq\ker(\Tr_{\mathbb{F}/\mathbb{F}_{p}})$ obtained above with these equalities, we obtain \eqref{lema3e1}.

Since $\Tr_{\mathbb{F}/\mathbb{F}_{p}}(\alpha)=0$, we have $\alpha\in \ker(\Tr_{\mathbb{F}/\mathbb{F}_{p}})$. By \eqref{lema3e1}, it follows that  $\alpha\in \operatorname{Im}(\phi)$. Therefore, there exists $\gamma\in \mathbb{F}$ such that 
$\gamma^p-\gamma=\alpha$.   Then 
\begin{equation}\notag
    (\gamma+\beta)^p-(\gamma+\beta)-\alpha
    =
    \gamma^p-\gamma-\alpha
    +
    (\beta^p-\beta)
    =0\quad \hbox{for all }\beta\in\mathbb{F}_p.
\end{equation}
Consequently,  $\gamma+\beta$ is a
 root of $y^p-y-\alpha$ for every  $\beta\in \mathbb{F}_p$. Since these constitute $p$ distinct roots and the polynomial $y^p-y-\alpha$ has degree $p$, it splits completely over $\mathbb{F}$.  \qed

\hspace*{\fill}

\noindent\textbf{Proof of Lemma \ref{lemmas4}. }
For simplicity, let $L=(q^h-1)/(q^s-1)$. Since $v\mid L$ and $L\mid(q^h-1)$, we have $v\mid(q^h-1)$.
Since $q$ is a power of $p$, it follows that
$p\nmid v$.
Let $\theta$ be a primitive element of $\mathbb{F}_{q^h}$. Then $\theta$ has multiplicative order $q^h - 1$, and hence  
$\theta^L$ has multiplicative order $q^s-1$. Consequently, $\theta^L$ generates the multiplicative group
$\mathbb{F}_{q^s}^{*}$; that is,
\begin{equation}\notag
\mathbb{F}_{q^s}^{*}=\langle\theta^L\rangle.
\end{equation}
Thus, for any $\alpha\in\mathbb{F}_{q^s}^{*}$, we can write
$\alpha=\theta^{Lj}$ for some integer $j$.
Let 
$
\beta=\theta^{(L/v)j},
$ and 
$\eta=\theta^{(q^h-1)/v}$.
Then we have $\beta^v=\alpha$, and $\eta$ has multiplicative order $v$.
Consequently, the $v$ distinct elements
$\beta,\beta\eta,\ldots,\beta\eta^{v-1}$ are nonzero solutions to
$x^v=\alpha$.

Finally, suppose that \eqref{dis} fails. Then there exists
$\gamma\in\mathbb{F}_{q^h}$ such that
$\alpha=\gamma^v=\alpha'$, contradicting $\alpha\ne\alpha'$.
Thus \eqref{dis} holds, completing the proof.
\qed

\hspace*{\fill}

\noindent\textbf{Proof of Lemma \ref{lemma5}. }
Since $1\leq t\leq q$, there exists
an $\mathbb{F}_p$-linear subspace
$\mathcal{U}\subseteq\mathbb{F}_q$ such that
$\dim_{\F_p}\mathcal{U}=\lceil\log_p t\rceil$. Then
$|\mathcal{U}|\geq t$, so we can choose a set
$\mathcal{A}\subseteq\mathcal{U}$ such that $|\mathcal{A}|=t$ and $0\in \mathcal{A}$. It follows that
$\Span_{\mathbb{F}_p}(\mathcal{A})\subseteq\mathcal{U}$, and hence
\begin{equation}\notag
\dim_{\F_p}\Span_{\F_p}(\mathcal{A})
\leq \dim_{\F_p}\mathcal{U}
= \lceil\log_p t\rceil.
\end{equation} 
Moreover, 
since $\mathcal{A}\subseteq\mathcal{U}\subseteq\mathbb{F}_q$, we also have
\begin{equation}\notag
P_{\mathcal{A}}'(\alpha)
=
\prod_{\substack{\beta\in \mathcal{A}\\ \beta\neq \alpha}}(\alpha-\beta)
\in\mathbb{F}_q^* \quad \hbox{for all }\alpha\in \mathcal{A}.
\end{equation}
It follows that $\mathcal{A}$ is a strict
$\mathbb{F}_q$-good zero-set. This completes the proof.
\qed

\end{document}